\newtheorem{theorem}{Theorem}[section]
\newtheorem{corollary}[theorem]{Corollary}
\newenvironment{proof}[1][Proof]{\begin{trivlist}
		\item[\hskip \labelsep {\bfseries #1}]}{\end{trivlist}}
\newenvironment{remark}[1][Remark]{\begin{trivlist}
		\item[\hskip \labelsep {\bfseries #1}]}{\end{trivlist}}
\DeclareMathOperator*{\argmax}{argmax}
\newcommand{\bea}{\begin{eqnarray*}}
	\newcommand{\eea}{\end{eqnarray*}}
\newcommand{\bean}{\begin{eqnarray}}
	\newcommand{\eean}{\end{eqnarray}}
\newcommand{\bfX}{{\bf X}}
\newcommand{\bfY}{{\bf Y}}
\newcommand{\sg}{\Sigma}
\newcommand{\what}{\widehat}
\newcommand{\calC}{\mathcal{C}}
\newcommand{\bbP}{\mathbb{P}} 
\newcommand{\bbR}{\mathbb{R}}
\newcommand{\bbE}{\mathbb{E}}
\begin{document}

\title{Bayesian Bandwidth Test and Selection for High-dimensional Banded Precision Matrices}
\author[1]{Kyoungjae Lee}
\author[1]{Lizhen Lin}
\affil[1]{Department of Applied and Computational Mathematics and Statistics, The University of Notre Dame}

\maketitle

\begin{abstract}
Assuming a banded structure is one of the common practice in the estimation of high-dimensional precision matrix. 
In this case, estimating the bandwidth of the precision matrix is a crucial initial step for subsequent analysis.
Although there exist some consistent frequentist tests for the bandwidth parameter, bandwidth selection consistency for precision matrices has not been established in a Bayesian framework.
In this paper, we propose a prior distribution tailored to the bandwidth estimation of high-dimensional precision matrices.
The banded structure is imposed via the Cholesky factor from the modified Cholesky decomposition.
We establish the strong model selection consistency for the bandwidth as well as the consistency of the Bayes factor. The convergence rates for Bayes factors under both the null and alternative hypotheses are derived which yield similar order of rates. As a by-product, we also proposed an estimation procedure for the Cholesky factors yielding an almost optimal order of convergence rates. 
Two-sample bandwidth test is also considered, and it turns out that our method is able to consistently detect the equality of bandwidths between two precision matrices.
The simulation study confirms that our method in general outperforms the existing frequentist and Bayesian methods.
\end{abstract}

Key words: Precision matrix; Bandwidth selection; Cholesky factor; Convergence rates of Bayes factor.

\section{Introduction}

Estimating a large covariance or precision matrix is a challenging task in both frequentist and Bayesian frameworks.
When the number of variables $p$ is larger than the sample size $n$, the traditional sample covariance matrix does not provide a consistent estimate of the true covariance matrix \citep{johnstone2009consistency}, and the inverse Wishart prior leads to the posterior inconsistency \citep{lee2018optimal}.
To overcome this issue, various restricted classes of matrices have been investigated such as the bandable matrices \citep{bickel2008regularized,cai2010optimal,hu2017minimax,banerjee2014posterior,lee2017estimating}, sparse matrices \citep{cai2012minimax,banerjee2015bayesian,xiang2015high,cao2017posterior} and low-dimensional structural matrices \citep{fan2008high,cai2015optimal,pati2014posterior,gao2015rate}.
In this paper, we focus on \emph{banded precision matrices}, where the banded structure is encoded via the Cholesky factor of the precision matrix.
We are in particular interested in the estimation of the bandwidth parameter and construction of Bayesian bandwidth tests for one or two banded precision matrices.
Inference of the bandwidth is of great importance for detecting the dependence structure of the ordered data.
Moreover, it is a crucial initial step for subsequent analysis such as linear or quadratic discriminant analysis.

Bandwidth selection of the high-dimensional precision matrices has received increasing attention in recent years.
\cite{an2014hypothesis} proposed a test for  bandwidth selection, which is asymptotically normal under the null hypothesis and has a power tending to one.
Based on the proposed test statistics, they constructed a backward procedure to detect the true bandwidth by controlling the familywise errors.
\cite{cheng2017test} suggested a bandwidth test without assuming any specific parametric distribution for the data and obtained a  result similar to that of \cite{an2014hypothesis}.

In the Bayesian literature, \cite{banerjee2014posterior} studied the estimation of bandable precision matrices which include the banded precision matrix as a special case.
They derived the posterior convergence rate of the precision matrix under the $G$-Wishart prior \citep{roverato2000cholesky}.
\cite{lee2017estimating} considered a similar class to that of \cite{banerjee2014posterior}, but assumed  bandable Cholesky factors instead of bandable precision matrices.
They showed the posterior convergence rates of the precision matrix as well as the minimax lower bounds.
In both  works,  posterior convergence rates were obtained for a given (fixed) bandwidth, and the posterior mode was suggested as a bandwidth estimator in practice.
However, no theoretical guarantee is provided for such  estimators. Further, no Bayesian bandwidth test exists  for one- or two-sample problems. 

This gap in the literature motivates us to investigate theoretical properties related to the general problem of  bandwidth test and selection, and propose estimators or tests with theoretical guarantees.
In this paper, we use the modified Cholesky decomposition of the precision matrix and assume  banded Cholesky factors.
The induced precision matrix also has  banded structure.
The key difference from \cite{lee2017estimating} is on the choice of prior distributions which will be introduced in Section \ref{subsec:prior}.
In addition, we focus on bandwidth selection and tests, while \cite{lee2017estimating} mainly studied the convergence rates of the precision matrix for a given or fixed bandwidth.

There are two main contributions of this paper.
First, we suggest a Bayesian procedure for banded precision matrices and prove the bandwidth selection consistency (Theorem \ref{theorem:band_sel}) and consistency of the Bayes factor (Theorem \ref{theorem:one_band}).
To the best of our knowledge, our work is the first that has established the bandwidth selection consistency for precision matrices under a Bayesian framework, which implies that the marginal posterior probability for the true bandwidth tends to one as $n\to\infty$.
\cite{cao2017posterior} proved  strong model selection consistency for the sparse directed acyclic graph models, but their method is not applicable to the bandwidth selection problem since it is not adaptive to the unknown sparsity. 
Second, we also prove the consistency of the Bayes factor for two-sample bandwidth testing problem (Theorem \ref{theorem:two_band}) and derived the convergence rates of the Bayes factor under both the null and alternative hypotheses.  
Our method  is able to consistently detect the equality of bandwidths between two different precision matrices.
To the best of our knowledge, this is also the first consistent two-sample bandwidth test result in both frequentist and Bayesian literature.
The existing literature (frequentist) focused only on the one-sample bandwidth testing \citep{an2014hypothesis,cheng2017test}. 


The rest of the paper is organized as follows.
Section \ref{sec:prelim} introduces the notations, model, priors and assumptions used.
Section \ref{sec:main} describes main results of this paper: bandwidth selection consistency and convergence rates of one-  and two-sample bandwidth tests.
Simulation study and real data analysis are presented in Section \ref{sec:numerical} to show the practical performance of the proposed method.
In Section \ref{sec:discussion}, concluding remarks and topics for the future work are given. The appendix includes a result on the nearly optimal estimation of the Cholesky factors, and  proofs of main results.

\section{Preliminaries}\label{sec:prelim}

\subsection{Notations}\label{subsec:notation}

For any real numbers $a$ and $b$, we denote $a \wedge b$ and $a \vee b$ as the minimum and maximum of $a$ and $b$, respectively.
For any sequences $a_n$ and $b_n$, we denote $a_n = o(b_n)$ if $a_n / b_n \to 0$ as $n\to \infty$.
We write $a_n \lesssim b_n$, or $a_n = O(b_n)$, if there exists an universal constant $C>0$ such that $a_n \le C b_n$ for any $n$.
We define vector $\ell_2$- and $\ell_\infty$-norms as $\|a\|_2 = (\sum_{j=1}^p a_j^2 )^{1/2}$ and $\|a\|_\infty= \max_{1\le j \le p}|a_j|$ for any $a = (a_1,\ldots,a_p)^T \in \bbR^p$.
For a matrix $A$, the matrix $\ell_\infty$-norm is defined as $\|A\|_\infty = \sup_{\|x\|_\infty=1 }\|Ax\|_\infty$.
We denote $\lambda_{\min}(A)$ and $\lambda_{\max}(A)$ as the minimum and maximum eigenvalues of $A$, respectively.

\subsection{Gaussian Models}\label{subsec:gaussian}



We consider a Gaussian model
\bean\label{model}
X_1,\ldots,X_n \mid \Omega_n &\overset{i.i.d.}{\sim}& N_p(0, \Omega_n^{-1}),
\eean
where $\Omega_n = \sg_n^{-1}$ is a $p\times p$ precision matrix and $X_i = (X_{i1},\ldots, X_{ip})^T\in \bbR^p$ for all $i=1,\ldots,n$. 
For any positive definite matrix $\Omega_n$, there exist unique lower triangular matrix $A_n=(a_{jl})$ and diagonal matrix $D_n=diag(d_j)$  such that
\bean\label{chol}
\Omega_n &=& (I_p - A_n)^T D_n^{-1} (I_p - A_n) ,
\eean
where $a_{jj} =0$ and $d_j >0$ for all $j=1,\ldots, p$, by the modified Cholesky decomposition (MCD). 
We call $A_n$ the {\it Cholesky factor}.
Define $k$ as the {\it bandwidth} of a matrix if the off-diagonal elements of the matrix farther than $k$ from the diagonal are all zero.
If the bandwidth of the Cholesky factor is $k$,   model \eqref{model} can be represented as
\bean\label{model2}
\begin{split}
	{X}_{i1} \mid d_1 \,\,&\overset{i.i.d.}{\sim}\,\, N(0,\, d_1 ), \\
	{X}_{ij} \mid a_{j}^{(k)}, d_j, k \,\,&\overset{ind}{\sim}\,\,  N \Big( \sum_{l = (j-k)_1}^{j-1} {X}_{il} a_{jl} ,\, d_j \Big) ,~ ~j=2,\ldots,p 
\end{split}
\eean
for all $i=1,\ldots,n$, where $a_{j}^{(k)} = (a_{j l})_{ (j-k)_1 \le l \le j-1} \in \bbR^{k_j}$, $(j-k)_1 = 1\vee (j-k)$ and $k_j = k \wedge(j-1)$. 
The above representation enables us to adopt priors and techniques in the linear regression literature.

We are interested in the consistent estimation and hypothesis test of the bandwidth $k$ of the precision matrix.
From the decomposition \eqref{chol}, \emph{the bandwidth of $A_n$ is $k$ if and only if the bandwidth of $\Omega_n$ is $k$.}
Thus, we can infer the bandwidth of the precision matrix by inferring that of the Cholesky factor.

\subsection{Prior Distribution}\label{subsec:prior}

Let $\tilde{X}_j \in \bbR^n$ and ${\bfX_{j (k)}} \in \bbR^{n \times k_j}$ be sub-matrices consisting of $j$th and $ (j-k)_1 ,\ldots, (j-1)$th columns of $\bfX_n =( X_1^T,\ldots, X_n^T)^T \in \bbR^{n \times p}$, respectively.
We suggest the following prior distribution
\begin{eqnarray}
a_{j}^{(k)} \mid d_j, k &\overset{ind}{\sim}& N_{k_j} \left( \what{a}_{j}^{(k)},\,\,   \frac{d_j}{\gamma } \big( {\bfX_{j (k)}}^T  {\bfX_{j (k)}}\big)^{-1}  \right) ,~~ j=2,\ldots,p , \label{prior1} \\
\pi(d_{j})  &\overset{i.i.d.}{\propto}& d_j^{ \tau n/2 -1 }, ~~ j=1,\ldots,p , \label{prior2} \\
k &\sim& \pi(k) , 
~~ k=0,1,\ldots,R_n \label{prior3}
\end{eqnarray}
for some positive constants $\gamma, \tau$ and positive sequence $R_n$, where $\what{a}_j^{(k)} =  ({\bfX_{j (k)}}^T {\bfX_{j (k)}})^{-1} {\bfX_{j (k)}}^T \tilde{X}_j$.
The conditional prior distribution for $a_j^{(k)}$ is a version of the Zellner's $g$-prior \citep{zellner1986assessing, martin2017empirical} in the linear regression literature.
Note that model \eqref{model2} is equivalent to $\tilde{X}_j \mid a_j^{(k)},d_j,k \sim N_n \big( \bfX_{j (k)} a_j^{(k)}, d_j I_n \big)$.
Due to the conjugacy, it enables us to calculate the posterior distribution in a closed form up to some normalizing constant.
The prior for $d_j$ is carefully chosen to reduce the posterior mass towards large bandwidth $k$.
We emphasize here that one can use the usual non-informative prior $\pi(d_j) \propto d_j^{-1}$, but necessary conditions for the main results in Section \ref{sec:main} should be changed.
This issue will be discussed in more details in the next paragraph.
We assume the prior $\pi(k)$ to have the support on $0,1,\ldots,R_n$.
We will introduce condition (A4) for $\pi(k)$ and the hyperparameters in Section \ref{subsec:assump}, and show that $\pi(k)\propto 1$ is enough to establish the main results in Section \ref{sec:main}.

The priors \eqref{prior1}--\eqref{prior3} lead to the following joint posterior distribution,
\begin{eqnarray}\label{post}
\begin{split}
a_{j}^{(k)} \mid d_{j}, k, \bfX_n \,\,&\overset{ind}{\sim}\,\, N_{k_j} \left(  \what{a}_{j}^{(k)}   , ~\frac{d_{j}}{1+\gamma}\big({\bfX_{j (k)}}^T  {\bfX_{j (k)}}\big)^{-1}  \right), \quad j=2,\ldots,p, \\
d_{j} \mid k, \bfX_n \,\,&\overset{ind}{\sim} \,\, IG \left(\frac{(1-\tau) n}{2}, \frac{n}{2}  \what{d}_{j}^{(k)} \right) , \quad j=1,\ldots, p ,\\
\pi(k \mid \bfX_n) \,\,&\propto\,\, \pi(k) \prod_{j=2}^p \left(1+ \frac{1}{\gamma} \right)^{-\frac{k_j}{2}} (\what{d}_{j}^{(k)})^{- \frac{(1-\tau) n}{2}}, \quad k=0,1,\ldots,R_n ,
\end{split}
\end{eqnarray}
provided that $\tau <1$, where $\what{d}_{j}^{(k)} = \tilde{X}_j^T ( I - \tilde{P}_{jk}) \tilde{X}_j /n$ and $\tilde{P}_{jk} = {\bfX_{j (k)}} ({\bfX_{j (k)}}^T {\bfX_{j (k)}})^{-1} {\bfX_{j (k)}}^T$.
The marginal posterior $\pi(k\mid\bfX_n)$ consists of two parts: the penalty on the model size, $\pi(k)\prod_{j=2}^p (1+ 1/\gamma)^{-k_j/2}$, and the estimated residual variances, $\prod_{j=2}^p (\what{d}_j^{(k)})^{-(1-\tau)n/2}$.
Thus,  priors \eqref{prior1} and \eqref{prior2} naturally impose the penalty term $\prod_{j=2}^p (1+ 1/\gamma)^{-k_j/2}$ for the marginal posterior $\pi(k\mid \bfX_n)$.

The effect of prior $\pi(d_j)\propto d_j^{\tau n/2 -1}$ appears in marginal posterior for $k$.
Compared with the prior $\pi(d_j)\propto d_j^{-1}$, it produces the term $(\what{d}_j^{(k)})^{-(1-\tau)n/2}$ instead of $(\what{d}_j^{(k)})^{-n/2}$.
Thus, it reduces the posterior mass towards large bandwidth $k$ since $\what{d}_j^{(k)}$ decreases as $k$ grows.
We conjecture that, at least for our prior choice of $\pi(a_j^{(k)}\mid d_j,k)$ with a constant $\gamma>0$, this power adjustment of $\what{d}_j^{(k)}$ is essential to prove the selection consistency for $k$.
Suppose  we use the prior $\pi(d_j)\propto d_j^{-1}$.
Similar to the proof of Theorem \ref{theorem:band_sel}, to obtain the selection consistency, we will use the inequality 
\bean \label{key_ineq}
\pi(k\mid\bfX_n) &\le& \frac{\pi(k\mid\bfX_n)}{\pi(k_0\mid\bfX_n)}  \,=\, \frac{\pi(k)}{\pi(k_0)} \prod_{j=2}^p \left(1+ \frac{1}{\gamma} \right)^{-\frac{k_j - k_{0j}}{2}} \left(\frac{\what{d}_{j}^{(k)} }{\what{d}_{j}^{(k_0)}} \right)^{-\frac{n}{2}},
\eean
and show that the expectation of the right hand side term converges to zero for any $k\neq k_0$ as $n\to\infty$, where $k_0$ is the true bandwidth. 
Note that unless $\pi(k_0\mid\bfX_n)$ shrinks to zero, the inequality causes only a constant multiplication.
The most important task is dealing with the last term in \eqref{key_ineq}, $(\what{d}_j^{(k)}/ \what{d}_{j}^{(k_0)} )^{-n/2}$.
Concentration inequalities for chi-square random variables (for examples, see Lemma 3 in \cite{yang2016computational} and Lemma 4 in \cite{shin2018scalable}) suggest an upper bound $p^{\alpha(k_j - k_{0j})}$ with high probability for any $2\le j\le p$, $k>k_0$ and some constant $\alpha>0$.
In this case, the hyperparameter $\gamma$ should be of order $p^{-\alpha'}$ for some constant $\alpha'> 2\alpha$ to make the right hand side  in \eqref{key_ineq} converge to zero.
Then, with the choice $\gamma \asymp p^{-\alpha'}$, condition (A2), which will be introduced in Section \ref{subsec:assump}, should be modified by replacing $1/n$ with $(\log p) /n$ to achieve the selection consistency.
In summary, the main results in this paper still hold for the prior $\pi(d_j) \propto d_j^{-1}$, but it requires stronger conditions due to technical reasons.
We state the results using prior \eqref{prior2} to emphasize that the bandwidth selection problem essentially requires weaker condition than the usual model selection problem.

\begin{remark}
	If we adopt the fractional likelihood  \citep{martin2017empirical}, we can achieve the selection consistency (Theorem \ref{theorem:band_sel}) with the prior $\pi(d_j)\propto d_j^{-1}$ instead of \eqref{prior2} under similar conditions in Theorem \ref{theorem:band_sel}.
	However, with the fractional likelihood, we cannot calculate the Bayes factor which is essential to describe the Bayesian test results in Sections \ref{subsec:tests} and \ref{subsec:two_tests}.
\end{remark}

\begin{remark}
	There are two consequences by using the data-dependent mean $\what{a}_{j}^{(k)}$.  
	First, we can avoid assuming an upper bound condition for $\| \bfX_{j(k_0)}a_{0,j}^{(k_0)} \|_2$  or $\|a_{0,j}^{(k_0)}\|_2$, where $a_{0,j}^{(k_0)} = (a_{0,jl} )_{(j-k_0)_1\le l\le j-1 }$ denotes the sub-vector of the true Cholesky factor.
	An upper bound condition is required if we adopt the Zellner's $g$-prior with zero mean \citep{shang2011consistency}, e.g., \cite{yang2016computational} assumed $\| \bfX_{j(k_0)}a_{0,j}^{(k_0)} \|_2^2 \le \gamma^{-1} d_{0j}\log p$ in order to prove selection consistency for the regression coefficient vector.
	Second, we do not need to assume the so-called information paradox of Zellner's $g$-prior \citep{liang2008mixtures}, which corresponds to $\gamma = p^{-2c}$ for some $c \ge 1/2$ in our notation.
	In this paper, we assume $\gamma$ is a constant satisfying some conditions in Section \ref{subsec:assump}.
\end{remark}

\subsection{Assumptions}\label{subsec:assump}

We denote $\Omega_{0n}$ as the true precision matrix whose MCD is given by $\Omega_{0n}=(I_p - A_{0n})^T D_{0n}^{-1} (I_p - A_{0n})$.
Let $\bbP_0$ and $\bbE_0$ be the probability measure and expectation corresponding to model \eqref{model} with $\Omega_{0n}$.
For the true Cholesky factor $A_{0n} = ( a_{0, jl} )$, we denote $k_0$ as the true bandwidth. 
We introduce conditions (A1)--(A4) for the true precision matrix and  priors \eqref{prior1}--\eqref{prior3}:\\
{\bf (A1)}\label{A1}.  Assume that $p$ increases to the infinity as $n\to \infty$.
Furthermore, there exist positive sequences $\epsilon_{0n} \le 1$ and $\zeta_{0n} \ge 1$ such that
$\epsilon_{0n} \le \lambda_{\min}(\Omega_{0n})\le \lambda_{\max}(\Omega_{0n})\le \zeta_{0n}$ for every $n\ge 1$ and $\zeta_{0n} \log p/\epsilon_{0n} = o(n)$.   \\
{\bf  (A2)}\label{A2}. For a given positive constant $\tau \in (0, 0.4]$ in  prior \eqref{prior2}, there exists a positive constant $M_{\rm bm}$ such that for every $n\ge 1$,
\bea
\min_{j,l: a_{0,jl}\neq 0} |a_{0,jl}|^2 &\ge& \frac{10 M_{\rm bm} }{ \tau(1-\tau) \, n } \, \frac{\zeta_{0n} }{ \epsilon_{0n}} .
\eea
{\bf  (A3)}\label{A3}.   The sequence $R_n$ in  prior \eqref{prior3} satisfies  $k_0 \le R_n \le \min \big\{ n\tau \epsilon_1 (1+\epsilon_1)^{-1} , (1-\epsilon_2) p \big\} $ for some small $0<\epsilon_1, \epsilon_2 <1$ and all sufficiently large $n$.   \\
{\bf  (A4)}\label{A4}. For given positive constants $\gamma$ and $\tau \in (0, 0.4]$ in priors \eqref{prior1} and \eqref{prior2}, assume that
\bean
\sum_{k>k_0}  \frac{\pi(k)}{\pi(k_0)} \left\{ C_{\gamma,\tau} \right\}^{- (k-k_0)\cdot (p - \frac{k+k_0+1}{2} ) } &=& o(1) ,  \label{size_cond} \\
\sum_{k <k_0} \frac{\pi(k)}{\pi(k_0)}  \left\{ C_{\gamma, M_{\rm bm}} \right\}^{ (k_{0}-k)\cdot (p - \frac{k+k_0+1}{2} ) } &=& o(1), \label{size_cond2}
\eean
where $C_{\gamma,\tau} = \big\{ (1+\gamma^{-1}) \cdot \tau (1+\epsilon_1)^{-1} \big\}^{1/2}$ and $C_{\gamma, M_{\rm bm}} = 2\, (1+\gamma^{-1} )^{1/2} \exp(-M_{\rm bm} )$.

Now, let us describe the above conditions in more detail.
The bounded eigenvalue condition for the true precision matrix is common in the high-dimensional precision matrix literature \citep{banerjee2014posterior,banerjee2015bayesian,xiang2015high,ren2015asymptotic}.
We allow that $\epsilon_{0n} \to 0$ and $\zeta_{0n} \to \infty$ as $n\to\infty$, so condition (A1) is much weaker than the condition in the above literature, which assumes $\epsilon_{0n}=\zeta_{0n}^{-1}= \epsilon_0$ for some small constant $\epsilon_0>0$.
\cite{cao2017posterior} also allowed diverging bounds, but assumed that $\zeta_{0n} = \epsilon_{0n}^{-1}$ and $(\log p/n)^{1/2 - 1/(2+t)} = o (\epsilon_{0n}^4 )$ for some $t>0$.
If we assume that $\zeta_{0n} = \epsilon_{0n}^{-1}$, then  condition (A1) implies that $\log p /n = o(\epsilon_{0n}^2)$, which is much weaker than the condition used in \cite{cao2017posterior}.

Condition (A2) is called the {\it beta-min condition}. 
If we assume that $\epsilon_{0n} = O(1)$ and $\zeta_{0n}=O(1)$, in our model it only requires the lower bound of the nonzero elements to be of order $O(1/\sqrt{n})$. 
In the sparse regression coefficient literature, the lower bound of the nonzero coefficients is usually assumed to be $\sqrt{\log p/n}$ up to some constant \citep{castillo2015bayesian,yang2016computational,martin2017empirical}.
Here, the $\sqrt{\log p}$ term can be interpreted as a price coming from the absence of information on the zero-pattern.
Condition (A2) reveals the fact that, under the banded assumption, we do not need to pay this price anymore.

Condition (A3) ensures that the true bandwidth $k_0$ lies in the support of $\pi(k)$.
Note that $k_0 \le (1-\epsilon_2)p$ is not an additional condition because the support of bandwidth should be smaller than $p$. 
The condition $k_0 \le n\tau \epsilon_1(1+\epsilon_1)^{-1}$ is needed for the selection consistency, which holds if we choose $R_n = C \vee \{n\tau \epsilon_1(1+\epsilon_1)^{-1} \}$ for some large constant $C>0$. 
Although this is slightly stronger than the condition $k_0 \le n-4$ in \cite{an2014hypothesis}, it is much weaker than those in other works.
For examples, \cite{banerjee2014posterior} assumed $k_0^5 = o(n/\log p)$ for the consistent estimation of precision matrix, and \cite{cheng2017test} assumed $k_0 = O( [n/\log p]^{1/2} )$ for  theoretical properties.

The equations \eqref{size_cond} and \eqref{size_cond2} in  condition (A4) guarantee $\bbE_0 [\pi(k>k_0\mid \bfX_n)] = o(1)$ and $\bbE_0 [\pi(k <k_0\mid \bfX_n)] = o(1)$, respectively.
Here we give some examples for $\pi(k)$ satisfying  conditions \eqref{size_cond} and \eqref{size_cond2}: if we choose
\bean\label{pi_k_xi}
\pi(k) &\propto& \xi^{k(p - \frac{k+1}{2})}
\eean
with $C_{\gamma,\tau}^{-1} < \xi < C_{\gamma, M_{\rm bm}}^{-1}$, it satisfies the conditions.
Furthermore, if we choose $\xi=1$, which leads to
\bean\label{pi_k_const}
\pi(k) &=& \frac{1}{R_n+1},
\eean
the conditions are met if $\tau >  (1+\epsilon_1)(1+\gamma^{-1})^{-1}$ and $\exp(M_{\rm bm}) >2 (1+\gamma^{-1})^{1/2}$.

\begin{remark}
	In the sparse linear regression literature, a common choice for the prior on the unknown sparsity $k$ is $\pi(k) \propto p^{-c k}$
	for some constant $c>0$. 
	See \cite{castillo2015bayesian}, \cite{yang2016computational} and \cite{martin2017empirical}.
	If we adopt this type of the prior into the bandwidth selection problem, a naive approach is using $\pi(k) \propto p^{-c k}$ for each row of the Cholesky factor: it results in $\pi(k) \propto p^{-c k(p-k)}$. 
	To obtain the strong model selection consistency, in this case, $M_{\rm bm}$ in condition (A2) has to be $M_{\rm bm} = M_{\rm bm}' \, \log p$ for some constant $M_{\rm bm}'>0$. 
	Thus, it unnecessarily requires stronger beta-min condition, which can be avoided by using $\pi(k)$ like \eqref{pi_k_xi} or \eqref{pi_k_const}.
\end{remark}

\section{Main Results}\label{sec:main}

\subsection{Bandwidth Selection Consistency}

When there is a natural ordering in the data set, estimating the bandwidth of the precision matrix is important for detecting the dependence structure.
It is a crucial first step for the subsequent analysis.
In this subsection, we show the bandwidth selection consistency of the proposed prior.
Theorem \ref{theorem:band_sel} states that the posterior distribution puts a mass tending to one at the true bandwidth $k_0$.
Thus, we can detect the true bandwidth using the marginal posterior distribution for the bandwidth $k$.
We call this property the bandwidth selection consistency.

\begin{theorem}\label{theorem:band_sel}
	Consider model \eqref{model} and priors \eqref{prior1}--\eqref{prior3}.
	If conditions (A1)--(A4) are satisfied, then we have
	\bea
	\bbE_0 \Big[ \pi \big(k \neq k_0 \mid \bfX_n \big) \Big] &=& o(1) .
	\eea
\end{theorem}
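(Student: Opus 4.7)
The plan is to bound $\bbE_0\pi(k\ne k_0\mid\bfX_n)$ term by term through the posterior-odds inequality $\pi(k\mid\bfX_n)\le \pi(k\mid\bfX_n)/\pi(k_0\mid\bfX_n)$. From the explicit marginal in \eqref{post},
\begin{equation*}
\frac{\pi(k\mid\bfX_n)}{\pi(k_0\mid\bfX_n)}
= \frac{\pi(k)}{\pi(k_0)}\Bigl(1+\gamma^{-1}\Bigr)^{-\sum_{j=2}^p(k_j-k_{0j})/2}\prod_{j=2}^p\Bigl(\what d_j^{(k)}/\what d_j^{(k_0)}\Bigr)^{-(1-\tau)n/2}.
\end{equation*}
The combinatorial identity $\sum_{j=2}^p|k_j-k_{0j}|=|k-k_0|(p-(k+k_0+1)/2)$ matches the exponents in \eqref{size_cond}--\eqref{size_cond2} exactly, so the task reduces to bounding the expected product of residual ratios by $C_{\gamma,\tau}^{-(\cdot)}$ in the overfitting regime $k>k_0$ and by something close to $\exp(-M_{\rm bm}(\cdot))$ in the underfitting regime $k<k_0$.

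For $k>k_0$, since $k_0$ is the true bandwidth and $\bfX_{j(k_0)}$ is a sub-matrix of $\bfX_{j(k)}$, one has $\tilde X_j=\bfX_{j(k_0)}a_{0,j}^{(k_0)}+\epsilon_j$ with $\bfX_{j(k_0)}a_{0,j}^{(k_0)}\in\mathrm{range}(\tilde P_{jk})$, hence $(I-\tilde P_{jk})\tilde X_j=(I-\tilde P_{jk})\epsilon_j$ and $(\tilde P_{jk}-\tilde P_{jk_0})\tilde X_j=(\tilde P_{jk}-\tilde P_{jk_0})\epsilon_j$. Conditional on $\bfX_{j(k)}$ these are centred Gaussians on orthogonal subspaces, so $\|(I-\tilde P_{jk})\tilde X_j\|^2/d_{0j}\sim\chi^2_{n-k_j}$ and $\|(\tilde P_{jk}-\tilde P_{jk_0})\tilde X_j\|^2/d_{0j}\sim\chi^2_{k_j-k_{0j}}$ are independent; consequently $\what d_j^{(k)}/\what d_j^{(k_0)}\sim\mathrm{Beta}((n-k_j)/2,(k_j-k_{0j})/2)$, free of $\bfX_{j(k)}$. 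Iterated expectation across $j=p,p-1,\dots,2$ (using that $\epsilon_j$ is independent of all $\bfX$-columns of index $<j$) factorises the joint moment, and a direct Gamma-ratio calculation gives $\bbE_0[(\what d_j^{(k)}/\what d_j^{(k_0)})^{-(1-\tau)n/2}]=\Gamma((\tau n-k_j)/2)\Gamma((n-k_{0j})/2)/[\Gamma((n-k_j)/2)\Gamma((\tau n-k_{0j})/2)]$, finite thanks to $k\le R_n\le n\tau\epsilon_1/(1+\epsilon_1)<n\tau$ from (A3). Log-convexity of $\log\Gamma$ bounds this by $((1+\epsilon_1)/\tau)^{(k_j-k_{0j})/2}$; multiplying across $j$ and combining with the prior penalty produces $C_{\gamma,\tau}^{-(k-k_0)(p-(k+k_0+1)/2)}$, and summing via \eqref{size_cond} finishes the overfitting half.

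For $k<k_0$ the prior penalty $(1+\gamma^{-1})^{(k_0-k)(p-(k+k_0+1)/2)/2}$ is \emph{against} the bound and the residual ratio must overpower it. The orthogonal decomposition $I-\tilde P_{jk}=(I-\tilde P_{jk_0})\oplus(\tilde P_{jk_0}-\tilde P_{jk})$ gives $n(\what d_j^{(k)}-\what d_j^{(k_0)})=\|(\tilde P_{jk_0}-\tilde P_{jk})\tilde X_j\|^2$, and under $\tilde X_j=\bfX_{j(k_0)}a_{0,j}^{(k_0)}+\epsilon_j$ the deterministic part equals $\|(I-\tilde P_{jk})\bfX_{\mathrm{diff},j}\,a_{0,\mathrm{diff},j}\|^2$, where $\bfX_{\mathrm{diff},j}$ collects the $k_{0j}-k_j$ columns dropped going from $k_0$ to $k$. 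By (A1) and a Schur-complement argument, the conditional covariance of $\bfX_{\mathrm{diff},j}$ given $\bfX_{j(k)}$ has $\lambda_{\min}\ge 1/\zeta_{0n}$, so sub-Gaussian Gram-matrix concentration yields $\|(I-\tilde P_{jk})\bfX_{\mathrm{diff},j}\,a_{0,\mathrm{diff},j}\|^2\gtrsim n\|a_{0,\mathrm{diff},j}\|^2/\zeta_{0n}$ off an exceptional event, and beta-min (A2) forces $\|a_{0,\mathrm{diff},j}\|^2\ge 10(k_{0j}-k_j)M_{\rm bm}\zeta_{0n}/(\tau(1-\tau)n\epsilon_{0n})$. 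Together with $\what d_j^{(k_0)}\le 2d_{0j}\le 2/\epsilon_{0n}$ on a $\chi^2$-typical event, this produces $\what d_j^{(k)}/\what d_j^{(k_0)}\ge 1+c(k_{0j}-k_j)M_{\rm bm}/n$ with $c$ large enough that raising to the $-(1-\tau)n/2$ power leaves a per-$j$ bound of $[2\exp(-M_{\rm bm})]^{k_{0j}-k_j}$; multiplying and comparing with \eqref{size_cond2} closes the argument, and the probability of the bad design event is absorbed by a union bound over $(k,j)$.

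The main obstacle is the underfitting half: I need the lower bound on the residual-variance jump to scale linearly in the \emph{number} $k_{0j}-k_j$ of omitted predictors, uniformly in $j$ and with high probability over the random design. The somewhat generous $10/(\tau(1-\tau))$ in (A2) is there precisely to survive the compounded losses from the Schur-complement reduction (the $1/\zeta_{0n}$ factor), Gram-matrix concentration, and the $2(1+\gamma^{-1})^{1/2}\exp(-M_{\rm bm})$ appearing in $C_{\gamma,M_{\rm bm}}$; matching these constants so that \eqref{size_cond2} is actually triggered is the delicate bookkeeping step of the proof.
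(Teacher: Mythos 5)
Your overfitting half ($k>k_0$) matches the paper's argument essentially step for step: the Beta law of $\what d_j^{(k)}/\what d_j^{(k_0)}$, the factorization across $j$ (which you justify by peeling more carefully than the paper does), the Gamma-ratio negative moment made finite by $R_n<\tau n$ from (A3), the per-omitted-coordinate bound $((1+\epsilon_1)/\tau)^{1/2}$, and the exponent identity feeding into \eqref{size_cond}. That part is correct.

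The underfitting half has a genuine gap at the step where you convert the lower bound on the residual-variance jump into a \emph{pointwise high-probability} bound $\what d_j^{(k)}/\what d_j^{(k_0)}\ge 1+c(k_{0j}-k_j)M_{\rm bm}/n$ and then claim the bad event is "absorbed by a union bound over $(k,j)$". Conditionally on the design, $n(\what d_j^{(k)}-\what d_j^{(k_0)})/d_{0j}\sim\chi^2_{k_{0j}-k_j}(\what\lambda_{jk})$ with noncentrality $\what\lambda_{jk}=\|(I_n-\tilde P_{jk})\bfX_{j(k_0)}a_{0j}^{(k_0)}\|_2^2/d_{0j}$, and under the $1/n$-scale beta-min (A2) this noncentrality is only of \emph{constant} order, $\what\lambda_{jk}\asymp(k_{0j}-k_j)M_{\rm bm}$; it does not grow with $n$. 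The Gaussian cross term $2\tilde\epsilon_j^T(I_n-\tilde P_{jk})\bfX_{j(k_0)}a_{0j}^{(k_0)}$ therefore has conditional standard deviation of the same order as the deterministic part of the jump, so your lower bound on the ratio fails with probability bounded below by a constant of order $e^{-cM_{\rm bm}}$ on each row, and the union bound over $j=2,\ldots,p$ contributes $p\,e^{-cM_{\rm bm}}\to\infty$. Making your route work would force $M_{\rm bm}\gtrsim\log p$, i.e.\ a beta-min at the $\log p/n$ scale, which is exactly the price (A2) is designed to avoid. The fix is to handle the lower tail of the jump \emph{in expectation} rather than pointwise: reserve the union bound for design-level events (eigenvalues of $\bfX_{j(k_0)}^T\bfX_{j(k_0)}/n$, central $\chi^2$ concentration of $\what d_j^{(k_0)}$, and an \emph{upper} bound on the jump that keeps the exponent's denominator under control), all of which have probabilities exponentially small in $n\epsilon_{0n}/\zeta_{0n}$ and so survive the union over $pk_0$ terms under (A1); then compute the conditional moment generating function of the Gaussian cross term on that good event to obtain the per-row factor $2e^{-M_{\rm bm}(k_{0j}-k_j)}$, which \emph{multiplies} across $j$ to $\{2(1+\gamma^{-1})^{1/2}e^{-M_{\rm bm}}\}^{(k_0-k)(p-(k+k_0+1)/2)}$ and is then summed over $k<k_0$ via \eqref{size_cond2}. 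This in-expectation multiplicative step is the one idea your proposal is missing; the rest of your underfitting skeleton (restricted-eigenvalue lower bound on the deterministic part of the jump via the Schur complement, beta-min, and the constant bookkeeping) is sound.
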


Informed readers might be aware of the recent work of \cite{cao2017posterior} considering the selection of sparse Cholesky factors.
It should be noted that their method is not applicable to the bandwidth selection problem.
The key issue is that their method is not adaptive to the unknown sparsity corresponding to the true bandwidth $k_0$ in this paper: to obtain the selection consistency, the choice of hyperparameter should depend on $k_0$, which is unknown and of interest.
Furthermore, they required stronger conditions in terms of dimensionality $p$, true sparsity $k_0$, eigenvalues of the true precision matrix and beta-min for the strong model selection consistency.


\begin{remark}
	The bandwidth selection result does not necessarily imply the consistency of the Bayes factor.
	Note that prior \eqref{prior1}, $\pi(d_j) \propto d_j^{-1}$ and
	\bean
	\pi(k) &\propto& \prod_{j=2}^p \big(\what{d}_j^{(k)} \big)^{\frac{\tau n}{2} }, \label{pi_k_emp}
	\eean
	and priors \eqref{prior1}, \eqref{prior2} and $\pi(k)\propto 1$ lead to the same marginal posterior for $k$.
	Thus, the above priors also achieve the bandwidth selection consistency in Theorem \ref{theorem:band_sel}.
	However, \eqref{pi_k_emp} might be inappropriate when the Bayes factor is of interest, because the  ratio of normalizing terms induced by prior \eqref{pi_k_emp} ($C_0$ and $C_1$ in \eqref{k0k1}) have a non-ignorable effect on the Bayes factor.
\end{remark}


\subsection{Consistency of One-Sample Bandwidth Test}\label{subsec:tests}

In this subsection, we focus on constructing a Bayesian bandwidth test for the testing problem $H_0: k \le k^\star$ versus $H_1: k > k^\star$ for some given $k^\star$.
A Bayesian hypothesis test is based on the Bayes factor $B_{10}(\bfX_n)$ defined by the ratio of marginal likelihoods,
\bea
B_{10}(\bfX_n) &=& \frac{p(\bfX_n \mid H_1)}{p(\bfX_n \mid H_0)} .
\eea
We are interested in the consistency of the Bayes factor which is one of the most important asymptotic properties of the Bayes factor \citep{dass2004note}.
A Bayes factor is said to be {\it consistent} if $B_{10}(\bfX_n)$ converges to zero in probability under the true null hypothesis $H_0$ and $B_{10}(\bfX_n)^{-1}$ converges to zero in probability under the true alternative hypothesis $H_1$.

Although the Bayes factor plays a crucial role in the Bayesian variable selection, its asymptotic behaviors in the high-dimensional setting are not well-understood \citep{moreno2010consistency}.
Few works studied the consistency of the Bayes factor in the high-dimensional settings \citep{moreno2010consistency,wang2014bayes,wang2016consistency}, which only focused on the pairwise consistency of the Bayes factor.
They considered the testing problem $H_0: k = k^{(0)}$ versus $H_1: k=k^{(1)}$ for any $k^{(0)}<k^{(1)}$, where $k$ is the number of nonzero elements of the linear regression coefficient.
Note that a Bayes factor is said to be {\it pairwise consistent}  if the Bayes factor $B_{10}(\bfX_n)$ is consistent for any pair of simple hypotheses $H_0$ and $H_1$.

We focus on the composite hypotheses $H_0: k \le k^\star$ and $H_1: k> k^\star$ rather than  simple hypotheses.
To conduct a Bayesian hypothesis test,  prior distributions for both hypotheses should be determined.
Denote the prior under the hypothesis $H_i$ as $\pi_i(A_n, D_n, k)$ for $i=0,1$.
Since the difference between two hypotheses comes only from the bandwidth, we will use the same conditional priors for $A_n$ and $D_n$ given $k$, i.e. $\pi_i(A_n, D_n, k) = \pi_i(k) \, \pi(A_n, D_n \mid k)$ for $i=0,1$, where $\pi(A_n, D_n \mid k)$ is chosen as \eqref{prior1} and \eqref{prior2}.
We suggest using priors $\pi_0(k)$ and $\pi_1(k)$ such that
\begin{equation}
\begin{split}\label{k0k1}
\pi_0(k) &= C_0^{-1} \pi(k)  ,\quad k=0,1,\ldots,k^\star  , \\
\pi_1(k) &= C_1^{-1} \pi(k)  ,\quad k=k^\star+1,\ldots,R_n, 
\end{split}
\end{equation}
where $C_0 = \sum_{k=0}^{k^\star}\pi(k)$ and $C_1 = \sum_{k=k^\star+1}^{R_n} \pi(k)$.
Then, the Bayes factor has the following analytic form,
\bea
B_{10}(\bfX_n) &=& \frac{ \sum_{k > k^\star} \int p(\bfX_n \mid \Omega_n, k) \pi (\Omega_n \mid k)\pi_1(k) d\Omega_n }{\sum_{k \le k^\star} \int p(\bfX_n \mid \Omega_n, k) \pi (\Omega_n\mid k) \pi_0(k) d\Omega_n} \\
&=& \frac{  \pi(k > k^\star\mid \bfX_n) }{ \pi(k \le k^\star\mid \bfX_n) } \times \frac{C_0 }{C_1 } ,
\eea
where the marginal posterior $\pi(k \mid \bfX_n)$ is given in \eqref{post} up to some normalizing constant.
Note that, the Bayes factor can be defined because both hypotheses have the same improper priors on $D_n$.
We will show that the Bayes factor is consistent for any composite hypotheses $H_0: k \le k^\star$ and $H_1: k> k^\star$, which is generally \emph{stronger than the pairwise consistency} of the Bayes factor.
If we assume that $\pi_1(k)/\pi_0(k') = O(1)$ for any $k$ and $k'$, then one can see that the consistency of the Bayes factor for hypotheses $H_0: k \le k^\star$ and $H_1: k> k^\star$ for any $k^\star$ implies the pairwise consistency of the Bayes factor for any pair of simple hypotheses $H_0:k =k^{(0)}$ and $H_1: k =k^{(1)}$ for $k^{(0)} < k^{(1)}$.

For given positive constants $M_{\rm bm}$, $\gamma$ and $\tau \in (0,0.4]$ and integers $R_n$, $k_0$ and $k^\star$, define
\bea
T_{n,H_0,k_0,k^\star} &=& k^\star \cdot \left\{ C_{\gamma,\tau}^{-1} \right\}^{ (k^\star+1-k_0)\cdot (p - \frac{R_n+k_0+1}{2} ) } , \\
T_{n,H_1,k_0,k^\star} &=& (R_n-k^\star) \cdot \left\{ C_{\gamma, M_{\rm bm}} \right\}^{ (k_0-k^\star)\cdot (p - \frac{k^\star+k_0+1}{2} ) }  ,
\eea
where $C_{\gamma,\tau}$ and $C_{\gamma, M_{\rm bm}}$ are defined in condition (A4).
Theorem \ref{theorem:one_band} shows the convergence rates of Bayes factors under each hypothesis.
It turns out that $\pi(k) = 1/(R_n+1)$ is sufficient for the consistency of the Bayes factor.

\begin{theorem}\label{theorem:one_band}
	Consider  model \eqref{model} and hypothesis testing problem $H_0: k \le k^\star$ versus $H_1: k > k^\star$.
	Assume priors \eqref{prior1} and \eqref{prior2} for $\pi(A_n, D_n \mid k)$ and the bandwidth priors in \eqref{k0k1} with $\pi(k)=1/(R_n+1)$. 
	If conditions (A1)--(A3) hold, $\tau > \gamma (1+\epsilon_1)/(1+\gamma)$ and $\exp(M_{\rm bm}) >2 \{(1+\gamma)/\gamma \}^{1/2}$,
	then the Bayes factor $B_{10}(\bfX_n)$ is consistent under $\bbP_0$.
	Moreover, under $H_0: k \le k^\star$, we have
	\bea
	B_{10}(\bfX_n)
	&=& O_p \big( T_{n,H_0,k_0,k^\star} \big), 
	\eea
	and under $H_1: k > k^\star$, 
	\bea
	B_{10}(\bfX_n)^{-1} &=& O_p(T_{n,H_1,k_0,k^\star}) . 
	\eea
\end{theorem}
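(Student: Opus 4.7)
The plan is to reduce the claim directly to the posterior-ratio expectation bounds that are already established in the proof of Theorem \ref{theorem:band_sel}, and then track the dominant term of each resulting geometric sum to extract the explicit rates. With $\pi(k)=1/(R_n+1)$ the two prior-normalizers satisfy $C_0/C_1=(k^\star+1)/(R_n-k^\star)$, so the closed-form Bayes factor decomposes as
\begin{equation*}
B_{10}(\bfX_n) \,=\, \frac{\pi(k>k^\star\mid\bfX_n)}{\pi(k\le k^\star\mid\bfX_n)}\cdot\frac{k^\star+1}{R_n-k^\star},
\end{equation*}
which separates the (polynomial-in-$n$) prior ratio from the posterior ratio. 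The two workhorse estimates I would borrow from the proof of Theorem \ref{theorem:band_sel} are
\begin{equation*}
\bbE_0\!\left[\frac{\pi(k\mid\bfX_n)}{\pi(k_0\mid\bfX_n)}\right]\le \{C_{\gamma,\tau}\}^{-(k-k_0)(p-(k+k_0+1)/2)}\quad(k>k_0),
\end{equation*}
\begin{equation*}
\bbE_0\!\left[\frac{\pi(k\mid\bfX_n)}{\pi(k_0\mid\bfX_n)}\right]\le \{C_{\gamma,M_{\rm bm}}\}^{(k_0-k)(p-(k+k_0+1)/2)}\quad(k<k_0),
\end{equation*}
which follow from Laplace-type identities for $\what{d}_j^{(k)}$ together with chi-square tail bounds. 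The hypotheses $\tau>\gamma(1+\epsilon_1)/(1+\gamma)$ and $\exp(M_{\rm bm})>2\{(1+\gamma)/\gamma\}^{1/2}$ translate respectively to $C_{\gamma,\tau}>1$ and $C_{\gamma,M_{\rm bm}}<1$, so each base rate decays geometrically in the distance from $k_0$.

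Under $H_0$ one has $k_0\le k^\star$, so $\pi(k\le k^\star\mid\bfX_n)\ge\pi(k_0\mid\bfX_n)$ and
\begin{equation*}
B_{10}(\bfX_n)\le \frac{k^\star+1}{R_n-k^\star}\sum_{k=k^\star+1}^{R_n}\frac{\pi(k\mid\bfX_n)}{\pi(k_0\mid\bfX_n)}.
\end{equation*}
Taking $\bbE_0$, applying the overshoot bound, and replacing the $k$-dependent shift $p-(k+k_0+1)/2$ by its uniform lower bound $p-(R_n+k_0+1)/2$ (legitimate since the exponent is monotone in $k$ because $R_n=o(p)$ under (A3)), one finds that the dominant summand sits at $k=k^\star+1$ and the sum is at most $(R_n-k^\star)\{C_{\gamma,\tau}\}^{-(k^\star+1-k_0)(p-(R_n+k_0+1)/2)}$. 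Multiplying by $(k^\star+1)/(R_n-k^\star)$ gives $\bbE_0 B_{10}(\bfX_n)=O(T_{n,H_0,k_0,k^\star})$, and Markov's inequality upgrades this to $B_{10}(\bfX_n)=O_p(T_{n,H_0,k_0,k^\star})$, which is $o_p(1)$ under the stated hypotheses.

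Under $H_1$ one has $k_0>k^\star$, so $\pi(k>k^\star\mid\bfX_n)\ge\pi(k_0\mid\bfX_n)$ and a symmetric manipulation yields
\begin{equation*}
B_{10}(\bfX_n)^{-1}\le \frac{R_n-k^\star}{k^\star+1}\sum_{k=0}^{k^\star}\frac{\pi(k\mid\bfX_n)}{\pi(k_0\mid\bfX_n)}.
\end{equation*}
The undershoot bound, whose base $C_{\gamma,M_{\rm bm}}$ lies strictly below one, identifies $k=k^\star$ as the dominant summand, and crude bounding gives an expectation of order $(k^\star+1)\{C_{\gamma,M_{\rm bm}}\}^{(k_0-k^\star)(p-(k^\star+k_0+1)/2)}$; Markov then delivers $B_{10}(\bfX_n)^{-1}=O_p(T_{n,H_1,k_0,k^\star})$. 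The main obstacle is essentially already subsumed in Theorem \ref{theorem:band_sel}: one must confirm that the per-$k$ expectation bounds are sharp enough that summing over the full range of $k$ only loses a factor of order $R_n$, so that the $O_p$ rates do not acquire a spurious polynomial-in-$p$ correction. A secondary technical check is that on the overshoot side the replaced exponent $p-(R_n+k_0+1)/2$ remains positive and of the same order as $p$, which is immediate from (A3).
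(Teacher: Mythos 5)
Your overall decomposition and the $H_0$ direction follow the paper's proof essentially verbatim: write $B_{10}(\bfX_n)=\frac{\pi(k>k^\star\mid\bfX_n)}{\pi(k\le k^\star\mid\bfX_n)}\cdot\frac{C_0}{C_1}$, lower-bound the denominator by $\pi(k_0\mid\bfX_n)$, and sum the overshoot expectation bounds from the proof of Theorem \ref{theorem:band_sel}. That part is fine (the $k^\star$ versus $k^\star+1$ discrepancy in $C_0/C_1$ is immaterial).

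The gap is in the $H_1$ direction. Your second ``workhorse estimate,'' the unrestricted bound $\bbE_0\big[\pi(k\mid\bfX_n)/\pi(k_0\mid\bfX_n)\big]\le \{C_{\gamma,M_{\rm bm}}\}^{(k_0-k)(p-(k+k_0+1)/2)}$ for $k<k_0$, is not established in the paper and does not follow from its argument. For $k<k_0$ the paper only controls the \emph{truncated} expectation $\bbE_0\big[\frac{\pi(k\mid\bfX_n)}{\pi(k_0\mid\bfX_n)}\prod_{j}I(N_{j,k}^c)\big]$: the moment-generating-function step requires the bounds $|\what{Q}_{jk}|\le 5\sqrt{\epsilon}+\what{\lambda}_{jk}/n$ and the eigenvalue/noncentrality controls that hold only on $N_{j,k}^c$. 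Off these events the ratio $(\what{d}_j^{(k)}/\what{d}_j^{(k_0)})^{-(1-\tau)n/2}$ can only be bounded by $1$, leaving the factor $\prod_j(1+1/\gamma)^{(k_{0j}-k_j)/2}$, which is exponentially large in $p(k_0-k)$; multiplying this by the exceptional probability $\sum_{j}\bbP_0(N_{j,k})\lesssim p\,e^{-cn}$ does not yield the claimed geometric decay when $p$ is large relative to $n$. Consequently your ``Markov then delivers $B_{10}^{-1}=O_p(T_{n,H_1,k_0,k^\star})$'' step is not justified. The paper's fix is exactly this: bound $\bbP_0\big(B_{10}^{-1}\ge C\,T_{n,H_1,k_0,k^\star}\big)$ by the probability that the truncated ratio is large (handled by Markov applied to the truncated expectation, giving $1/C$) plus $\sum_{k\le k^\star}\sum_j\bbP_0(N_{j,k})=o(1)$. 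You need to insert this event-restriction argument; as written, the $H_1$ half of your proof rests on an unproved (and likely false in the regime $p\gg n$) moment bound.
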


\begin{remark}
	By choosing the prior $\pi(k) = 1/(R_n+1)$, it implies that $C_{\gamma,\tau}^{-1}$ and $C_{\gamma, M_{\rm bm}}$ are strictly smaller than $1$ by \eqref{pi_k_xi}.
	Since we assume that $p\to \infty$ as $n\to\infty$, one can easily check that $T_{n,H_0,k_0,k^\star}$ and $T_{n,H_1,k_0,k^\star}$ go to zero  as $n\to\infty$ under $H_0: k \le k^\star$ and $H_1: k > k^\star$, respectively.
\end{remark}

\begin{remark}
	Note that if we use prior \eqref{pi_k_xi} with $\xi\neq 1$, the effect of the prior, $C_0/C_1$, can dominate the posterior ratio, $\pi(k > k^\star\mid \bfX_n) / \pi(k \le k^\star\mid \bfX_n)$ in the Bayes factor.  
	Because the prior knowledge on the bandwidth is usually not sufficient, it is clearly undesirable.
	Moreover, the direction of effect is the opposite of the prior knowledge.
\end{remark}

\cite{an2014hypothesis} and \cite{cheng2017test} developed frequentist bandwidth tests for  the hypotheses $H_0: k \le k^\star$ versus $H_1: k > k^\star$ and showed that their test statistic is asymptotically normal under the null and has a power converging to one as $n\wedge p \to\infty$.
Compared with the result in Theorem \ref{theorem:one_band}, \cite{cheng2017test} required the upper bound  $k_0 = O( [n/\log p]^{1/2} )$ for the true bandwidth $k_0$, which is much stronger than our condition (A3). 
\cite{an2014hypothesis} allowed $k_0 \le n-4$, but  assumed that the partial correlation coefficient between $X_{ij}$ and $X_{i,j-k_0}$ given $X_{i,j-k_0+1},\ldots,X_{i,j-1}$ is of order $o(n^{-1})$.
It implies that $\max_j |a_{0, j j-k_0}|$ converges to zero at some rate.
Thus, the nonzero elements $a_{0, j j-k_0}$, $j=k_0+1,\ldots,p$ {\it should} converge to zero, which is somewhat unnatural.



\cite{johnson2010use,johnson2012bayesian} and \cite{rossell2017tractable} pointed out that the use of local alternative prior leads to imbalanced convergence rates for the Bayes factors, and showed that this issue can be avoided by using  non-local alternative priors. 
However, interestingly, convergence rates for the Bayes factors in Theorem \ref{theorem:one_band} yield \emph{similar order of rates} under  both hypotheses without using a non-local prior.
Roughly speaking, 
the imbalance issue can be ameliorated  by introducing the beta-min condition (Condition (A2)).
To simplify the situation, consider the model
\bea
Y &=& X \beta^{(k)} + \epsilon,
\eea
where $Y = (Y_1,\ldots,Y_n)^T$, $X\in \bbR^{n\times p}$, $\beta^{(k)} = (\beta_1,\ldots,\beta_k ,0,\ldots,0)^T \in \bbR^p$, $\epsilon=(\epsilon_1,\ldots,\epsilon_n)^T$ and $\epsilon_i \overset{i.i.d.}{\sim} N(0, \sigma^2)$. 
Suppose  priors  \eqref{prior1} and \eqref{prior2} are imposed on $(\beta_1,\ldots,\beta_k)^T$ and $\sigma^2$ given $k$. 
Consider  hypotheses $H_0: k=k_1$ and $H_1: k=k_2$, where $k_1 < k_2$, and assume that the eigenvalues of $X_{(1:k_2)}$ are bounded and $k_2-k_1 \to \infty$ as $n\to\infty$ for simplicity.
Note that the prior for $\beta^{(k_2)}$ is a local alternative prior  because  $\pi(\beta^{(k_2)} \mid \sigma^2)>c$ on $\{ \beta^{(k_2)} \in \bbR^{k_2} : \beta^{(k_2)} =(\beta_1,\ldots,\beta_{k_1},0,\ldots,0)^T \}$ for some constant $c>0$.
If $H_0$ is true, $B_{10}(Y)$ decreases at rate $O_p( e^{-c_0(k_2-k_1)} )$ for some constant $c_0>0$ based on techniques in the proof of Theorem \ref{theorem:band_sel}.
On the other hand, if $H_1$ is true, $B_{10}(Y)^{-1}$ decreases exponentially with $n (k_2-k_1) \beta_{\rm min}^2$, where $\beta_{\rm min}$ is the lower bound for the absolute of nonzero elements of $\beta_0^{(k_2)}$.
\cite{johnson2010use,johnson2012bayesian} and \cite{rossell2017tractable}  assumed that $\beta_{\rm min}^2 > c_1$ for some constant $c_1>0$.
In that case, $B_{10}(Y)^{-1}$ decreases exponentially with $n (k_2-k_1) c_1$, which causes the imbalanced convergence rates.
However, if we assume  $\beta_{\rm min}^2 \ge c_2 n^{-1}$ similar to condition (A2), $B_{10}(Y)^{-1}$ decreases at rate $O_p( e^{-c_2(k_2-k_1)} )$ for some constant $c_2>0$.
Thus, convergence rates for the Bayes factors have similar order under the both hypotheses.

The above argument does not mean that the non-local priors are not useful for our problem. 
We note that the balanced convergence rates by using the beta-min condition is different from those by using the non-local prior.
The former makes the rate of $B_{10}(Y)^{-1}$ slower under $H_1$, while the latter makes the rate of $B_{10}(Y)$ faster under $H_0$.
Thus, the use of non-local priors might improve the rates of convergence for $B_{10}(Y)$ under $H_0$ in Theorem \ref{theorem:one_band}.
However, it will increase the computational burden and is unclear which rate one can achieve using the non-local prior under condition (A2), so we leave it as a future work.

\subsection{Consistency of Two-Sample Bandwidth Test}\label{subsec:two_tests}

Suppose we have two data sets from the models
\bean\label{model_two}
\begin{split}
	X_1,\ldots, X_{n_1} \mid \Omega_{1n_1} &\overset{i.i.d.}{\sim} N_p(0, \Omega_{1n_1}^{-1}) , \\
	Y_1,\ldots, Y_{n_2} \mid \Omega_{2 n_2} &\overset{i.i.d.}{\sim} N_p(0, \Omega_{2n_2}^{-1}),
\end{split}
\eean
where $\Omega_{1n_1} = (I_p- A_{1n_1})^T D_{1n_1}^{-1}(I_p - A_{1n_1})$ and $\Omega_{2n_2} = (I_p- A_{2n_2})^T D_{2n_2}^{-1}(I_p - A_{2n_2})$ are the MCDs. 
Denote the bandwidth of $\Omega_{i n_i}$ as $k_i$ for $i=1,2$.
In this subsection, our interest is the test of equality between two bandwidths $k_1$ and $k_2$, the {\it two-sample bandwidth test}.
We consider the hypothesis testing problem $H_0: k_1 = k_2$ versus $H_1: k_1 \neq k_2$ and investigate the asymptotic behavior of the Bayes factor,
\bea
B_{10}(\bfX_{n_1}, \bfY_{n_2}) &=& \frac{p(\bfX_{n_1}, \bfY_{n_2} \mid H_1)}{p(\bfX_{n_1}, \bfY_{n_2} \mid H_0)} ,
\eea
where $\bfX_{n_1} = (X_1^T, \ldots, X_{n_1}^T)^T \in \bbR^{n_1 \times p}$ and $\bfY_{n_2} = (Y_1^T, \ldots, Y_{n_2}^T)^T \in \bbR^{n_2 \times p}$.
Suppose that multivariate observations are collected from two populations, and a test of the equality of dependence structure is the main interest.
When the dependence structure is directly related to how many previous variables influencing the current variable,  two-sample bandwidth test provides a suitable answer.

Denote the priors under $H_0$ and $H_1$ as, respectively
\bea
&& \pi_0 \big(A_{1 n_1}, D_{1 n_1}, A_{2 n_2}, D_{2 n_2} \mid k \big)  \\
&=& \pi \big(A_{1 n_1}, D_{1  n_1} \mid k \big) \, \pi \big(A_{2 n_2}, D_{2 n_2}\mid k \big) \, \pi_0(k), \quad k=0,1,\ldots,R_n,
\eea
and 
\bea
&& \pi_1 \big(A_{1 n_1}, D_{1 n_1}, A_{2 n_2}, D_{2 n_2} \mid k_1, k_2 \big)  \\
&=& \pi\big(A_{1 n_1}, D_{1  n_1} \mid k_1\big) \, \pi \big(A_{2 n_2}, D_{2 n_2}\mid k_2\big) \, \pi_1(k_1, k_2) , \quad 0\le k_1 \neq k_2 \le R_n.
\eea

We suggest the following conditional priors  $\pi(A_{1 n_1}, D_{1  n_1} \mid k_1)$ and $\pi(A_{2 n_2}, D_{2 n_2}\mid k_2)$ for any given $k_1$ and $k_2$,
\bean\label{twesam_prior}
\begin{split}
	a_{1,j}^{(k_1)} \mid d_{1,j} , k_1 \,\,&\overset{ind}{\sim}\,\, N_{k_{1j}} \Big(\what{a}_{1,j}^{(k_1)} , \frac{d_{1,j}}{\gamma} \big( \bfX_{j (k_1)}^T \bfX_{j (k_1)} \big)^{-1}  \Big) , \\
	\pi(d_{1,j}) \,\,&\overset{i.i.d.}{\propto}\,\, d_{1,j}^{\tau n_1/2 - 1} , \\
	a_{2,j}^{(k_2)} \mid d_{2,j} , k_2 \,\,&\overset{ind}{\sim}\,\, N_{k_{2j}} \Big(\what{a}_{2,j}^{(k_2)} , \frac{d_{2,j}}{\gamma} \big( \bfY_{j (k_2)}^T \bfY_{j (k_2)} \big)^{-1}  \Big) , \\
	\pi(d_{2,j}) \,\,&\overset{i.i.d.}{\propto}\,\, d_{2,j}^{\tau n_2/2 - 1} ,
\end{split}
\eean
where $k_{ij}= k_i \wedge (j-1)$, $a_{i,j}^{(k_i)} \in \bbR^{k_{ij}}$ is the nonzero elements in the $j$th row of $A_{i n_i}$ and $D_{i n_i} = diag(d_{i,j})$ for $i=1,2$.
Similar to the previous notations, we denote $\what{a}_{1,j}^{(k_1)} = (\bfX_{j (k_1)}^T \bfX_{j (k_1)} )^{-1} \bfX_{j (k_1)}^T \tilde{X}_j$ and $\what{a}_{2,j}^{(k_2)} = (\bfY_{j (k_2)}^T \bfY_{j (k_2)} )^{-1} \bfY_{j (k_2)}^T \tilde{Y}_j$, where $\bfY_{j (k_2)}\in \bbR^{n\times k_2}$ is the sub-matrix consisting of $(j-k_2)_1,\ldots, (j-1)$th columns of $\bfY_n$.
The priors on bandwidths are chosen as
\begin{equation}
\begin{split}\label{twosample_pik}
\pi_0(k) &=  \frac{1}{R_n+1}, \quad k=0,1,\ldots,R_n, \\
\pi_1(k_1, k_2) &= \frac{1}{R_n(R_n+1)}, \quad 0 \le k_1 \neq k_2 \le R_n .
\end{split}
\end{equation}
This choice of priors leads to an analytic form of the Bayes factor,
\bea
B_{10}(\bfX_{n_1}, \bfY_{n_2}) &=& \frac{\sum_{k_1\neq k_2} \pi(k_1\mid \bfX_{n_1})\pi(k_2\mid \bfY_{n_2})  }{\sum_{k_1= k_2} \pi(k_1\mid \bfX_{n_1})\pi(k_2\mid \bfY_{n_2})} \times R_n^{-1} ,
\eea
where the marginal posterior distributions $\pi(k_1\mid \bfX_{n_1})$ and $\pi(k_2 \mid \bfY_{n_2})$ are known up to some normalizing constants similar to \eqref{post}.
We denote $\Omega_{0,i n_i}$ as the true precision matrix with bandwidth $k_{0i}$ for $i=1,2$ and assume that $p$ tends to infinity as  $n = n_1 \wedge n_2 \to \infty$.
Theorem \ref{theorem:two_band} gives a sufficient condition for the consistency of the Bayes factor $B_{10}(\bfX_{n_1}, \bfY_{n_2})$ by calculating the convergence rates.

\begin{theorem}\label{theorem:two_band}
	Consider  model \eqref{model_two} and hypotheses $H_0: k_1=k_2$ and $H_1:k_1 \neq k_2$.
	Assume the conditional priors given bandwidths \eqref{twesam_prior} and the bandwidth priors \eqref{twosample_pik}. 
	If conditions (A1)--(A3) for $\Omega_{0,1n_1}$, $\Omega_{0,2n_2}$ and priors are satisfied, $\tau > \gamma (1+\epsilon_1)/(1+\gamma)$ and $\exp(M_{\rm bm}) >2 \{(1+\gamma)/\gamma \}^{1/2}$, then the Bayes factor $B_{10}(\bfX_{n_1}, \bfY_{n_2})$ is consistent under $\bbP_0$.
	Moreover, under $H_0:k_1=k_2$, we have
	\bea
	B_{10}(\bfX_{n_1}, \bfY_{n_2})
	&=&  O_p \left( \frac{k_0}{R_n-k_{0}} T_{n,H_1,k_0, k_0-1} + \frac{R_n-k_0}{k_0} T_{n,H_0, k_0, k_0}   \right),
	\eea
	and under $H_1: k_1\neq k_2$, 
	\bea
	B_{10}(\bfX_{n_1}, \bfY_{n_2})^{-1}
	&=& O_p \left(  \frac{R_n \, k_{\min}}{R_n-k_{\min}} T_{n,H_1, k_{\min}, k_{\min}-1} + \frac{R_n(R_n-k_{\min})}{k_{\min}} T_{n,H_0, k_{\min},k_{\min}} \right),
	\eea
	where $k_{\min} = k_{01}\wedge k_{02}$.
\end{theorem}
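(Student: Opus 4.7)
The plan is to plug the priors \eqref{twesam_prior} and \eqref{twosample_pik} into the definition of the Bayes factor, obtaining the closed form
\bea
B_{10}(\bfX_{n_1}, \bfY_{n_2}) &=& \frac{\sum_{k_1 \neq k_2} \pi(k_1 \mid \bfX_{n_1}) \pi(k_2 \mid \bfY_{n_2})}{\sum_{k} \pi(k \mid \bfX_{n_1}) \pi(k \mid \bfY_{n_2})} \cdot R_n^{-1},
\eea
and then to reduce everything to one-sample tail estimates for the two marginal posteriors. The hyperparameter inequalities in the statement exactly reproduce condition (A4) for $\pi(k)=1/(R_n+1)$, so Theorem \ref{theorem:band_sel} gives $\pi(k_{0i}\mid \cdot)=1+o_p(1)$ for each sample, and repeating the argument used to prove Theorem \ref{theorem:one_band} for each sample separately delivers the sharper one-sided rates
\bea
\pi(k>k_{0i}\mid \cdot) &=& O_p\!\Big(\tfrac{R_n-k_{0i}}{k_{0i}+1}\,T_{n,H_0,k_{0i},k_{0i}}\Big), \\
\pi(k<k_{0i}\mid \cdot) &=& O_p\!\Big(\tfrac{k_{0i}}{R_n-k_{0i}+1}\,T_{n,H_1,k_{0i},k_{0i}-1}\Big).
\eea

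Under $H_0:k_{01}=k_{02}=k_0$, I would set $a_i=1-\pi(k_0\mid \cdot)$ for $i=1,2$, bound the numerator by $1-(1-a_1)(1-a_2)\le a_1+a_2$ and the denominator from below by $(1-a_1)(1-a_2)=1+o_p(1)$. Substituting the two one-sided tail bounds into $a_1+a_2$ and dividing by $R_n$ then produces the first claimed rate (the explicit $1/R_n$ is absorbed into the $O_p$).

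Under $H_1:k_{01}\neq k_{02}$, I would assume without loss of generality that $k_{\min}=k_{01}<k_{02}$, lower bound the numerator by the single term $\pi(k_{01}\mid \bfX_{n_1})\pi(k_{02}\mid \bfY_{n_2})=1+o_p(1)$, and upper bound the denominator via the split
\bea
\sum_k \pi(k\mid \bfX_{n_1})\pi(k\mid \bfY_{n_2}) &\le& \pi(k>k_{\min}\mid \bfX_{n_1}) + \pi(k\le k_{\min}\mid \bfY_{n_2}),
\eea
coming from $\pi(k\mid \bfX_{n_1})\pi(k\mid \bfY_{n_2})\le \pi(k\mid \bfX_{n_1})$ for $k>k_{\min}$ and $\pi(k\mid \bfX_{n_1})\pi(k\mid \bfY_{n_2})\le \pi(k\mid \bfY_{n_2})$ for $k\le k_{\min}$. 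The first summand is exactly the over-estimation rate at threshold $k_{\min}$ for $\bfX_{n_1}$; the second is an under-estimation rate for $\bfY_{n_2}$ at a threshold strictly below its true bandwidth $k_{02}$. Multiplying by $R_n$ produces the second claimed rate.

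The hard part will be verifying that the under-estimation rate for $\bfY_{n_2}$ at threshold $k_{\min}<k_{02}$ can be dominated by $T_{n,H_1,k_{\min},k_{\min}-1}$ rather than the natural $T_{n,H_1,k_{02},k_{\min}}$ that a direct appeal to Theorem \ref{theorem:one_band} produces; this is the step that makes the final rate depend only on $k_{\min}$ instead of on both bandwidths. The comparison is a direct exponent calculation using $(k_{02}-k_{\min})\ge 1$ together with the definitions of $C_{\gamma,\tau}$ and $C_{\gamma,M_{\rm bm}}$ from condition (A4), and shows that the ratio of the two $T$-quantities is bounded by an absolute constant. Once this monotonicity observation is in place the remaining work is direct substitution of the one-sample tail estimates.
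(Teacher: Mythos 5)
Your proposal is correct and follows essentially the same route as the paper: both express the two-sample Bayes factor in the product closed form and reduce it to the one-sample over- and under-estimation tails $\pi(k>k_{0i}\mid\cdot)$ and $\pi(k<k_{0i}\mid\cdot)$ supplied by the proofs of Theorems \ref{theorem:band_sel} and \ref{theorem:one_band}, with your $a_1+a_2$ bookkeeping under $H_0$ and the split of the diagonal sum at $k_{\min}$ under $H_1$ differing only cosmetically from the paper's decomposition into the $k_1=k_0$ (resp.\ $k_1=k_2=k_{01}$) term plus the rest. The one delicate step you flag --- that the under-estimation tail of $\bfY_{n_2}$ at threshold $k_{\min}$ is controlled by $T_{n,H_1,k_{\min},k_{\min}-1}$ rather than $T_{n,H_1,k_{02},k_{\min}}$, the ratio being at most $C_{\gamma, M_{\rm bm}}^{-1}$ in the worst case $k_{02}-k_{\min}=1$ --- is exactly the comparison the paper performs implicitly, so your exponent check supplies a detail the paper leaves unstated.
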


As mentioned earlier, to the best of our knowledge, this is the first consistent two-sample bandwidth test result in high-dimensional settings.
Frequentist testing procedures in \cite{an2014hypothesis} and \cite{cheng2017test} focused only on the one-sample bandwidth test, and it is unclear whether these methods can be extended to the two-sample testing problem.

Note that the hypothesis testing problem $H_0:k_1=k_2$ versus $H_1:k_1 \neq k_2$ is different from the hypothesis testing $H_0: \Omega_{1n_1} = \Omega_{2 n_2}$ versus $H_1: \Omega_{1n_1} \neq \Omega_{2 n_2}$ in \cite{cai2013two}.
The latter testing problem is called the two-sample precision (or covariance) test.
The two-sample bandwidth test is weaker than the two-sample precision test, i.e. if the two-sample bandwidth test supports the null hypothesis, then one can further conduct the two-sample precision test.

\section{Numerical Results}\label{sec:numerical}

We have proved the bandwidth selection consistency and  convergence rates of Bayes factors based on  priors \eqref{prior1}--\eqref{prior3}. 
In this section, we conduct simulation studies to describe the practical performance of the proposed method.
Throughout the section, we use the prior $\pi(k) = 1/(R_n+1)$.

\subsection{Comparison with other Bandwidth Tests}\label{subsec:comparison}

In this subsection, we compared the performance of our method with those of other bandwidth selection procedures.
Since we have bandwidth selection consistency (Theorem \ref{theorem:band_sel}), we suggest using the posterior mode to estimate the true bandwidth $k_0$.
We chose the bandwidth test of \cite{an2014hypothesis} as a frequentist competitor and the bandwidth selection procedures of \cite{banerjee2014posterior} and \cite{lee2017estimating} as Bayesian competitors.
Significance levels for bandwidth tests in \cite{an2014hypothesis} were varied $\alpha=0.001, 0.005, 0.01$, but only the result with $\alpha=0.01$ are reported since they gave similar results.
For \cite{banerjee2014posterior} and \cite{lee2017estimating}, we used the prior $\pi(k) \propto \exp(-k^4)$ as they suggested.
Note that these Bayesian procedures do not guarantee the bandwidth selection consistency.

To calculate the marginal posterior in \eqref{post}, the hyperparameters $\gamma, \tau$ and $R_n$ should be determined.
As a pragmatic approach, we incorporated cross-validation (CV) to select $\gamma$, and fixed $\tau=0.01$ and $R_n=k_0+10$; in our experiments, we also tried $\tau=0.01,0.02,\ldots,0.30$ and selected $\tau$ via CV, but found that $\tau=0.01$ is selected in most cases.
We randomly divided the data $\bfX_n$ into two subsamples, the test set $\bfX_{n_1}^{te}$ and training set $\bfX_{n_2}^{tr}$, where $n_1 = \lceil n/3 \rceil$ and $n_2 = n - n_1$.
Let $\what{k}(\gamma)$ be the posterior mode based on $\bfX_{n_2}^{tr}$ and a given $\gamma$, and define the mean squared error (MSE)
\bea
MSE(\gamma) &=& \sum_{j=1}^p \| \tilde{X}_j^{te} - \bfX_{j (\what{k}(\gamma))}^{te} \what{a}_j^{(\what{k}(\gamma))} \|_2^2.
\eea
We divided the data 20 times and selected $\what{\gamma}$ which minimizes $20^{-1} \sum_{\nu=1}^{20}MSE_\nu(\gamma)$, where $MSE_\nu(\gamma)$ is the MSE from the $\nu$th subsampling. 
Depending on the purpose of the analysis, other criteria besides MSE can be adopted.

The data sets were generated from $N_p(0, \Omega_{0n}^{-1})$, where $\Omega_{0n} = (I_p - A_{0n})^T D_{0n}^{-1} (I_p - A_{0n})$.
For each $j=k_0+1,\ldots,p$,  nonzero elements in the $j$th row of the true Cholesky factor $A_{0n}$ were sampled from $Unif(A_{0,\min}, A_{0,\max})$ and ordered to satisfy $a_{0,jl} \le a_{0,jl'}$ for any $l < l'$.
The diagonal elements of $D_{0n}$ were generated from $Unif(5, 10)$.
To investigate performance in various settings, the values of $n,p, k_0, A_{0,\min}$ and $A_{0, \max}$ were varied.
The simulation results, based on 50 simulated data sets for each setting, are reported in Table \ref{table:comp1} and Figure \ref{fig:comparison}.
We denoted the proposed method in this paper as BBS, the Bayesian Bandwidth Selector.

\begin{table}[!tb]
	\centering\footnotesize
	\caption{
		The summary statistics for each setting are represented, where $k_0=5$ and $[A_{0,\min},A_{0,\max}]=[0.1, 0.1]$. 
		BBS: the proposed method in this paper. 
		LL: bandwidth selection procedure of \cite{lee2017estimating}. 
		BG: bandwidth selection procedure of \cite{banerjee2014posterior}.
		BA1 and BA2: algorithms 1 and 2 in \cite{an2014hypothesis}, respectively.
	}\vspace{.15cm}
	\begin{tabular}{c c c c c}
		\hline 
		& \multicolumn{4}{c}{$\big(\, \what{p}_0 , \,\, \what{k}_0 \,\big)$} \\ 
		& $(n=70, p=100)$ & $(n=70, p=200)$ & $(n=200, p=100)$ & $(n=200, p=200)$ \\ \hline
		BBS & $(0.96 ,\, 4.94)$ & $(0.98, \, 4.98)$  & $(1.00,\, 5.00)$    & $(1.00,\, 5.00)$  \\ \hline
		BA1  & $(0.78 ,\, 4.84)$ & $(0.96,\, 5.16)$  &   $(0.96,\, 5.06)$  & $(1.00,\, 5.00)$ \\ \hline
		BA2  & $(0.78 ,\, 4.84)$ & $(0.98,\, 5.18)$  &  $(0.96,\, 5.06)$   & $(1.00,\, 5.00)$  \\ \hline
		LL & $(0.00 ,\,  1.00)$ & $(0.00 ,\,  1.00)$   &  $(0.00,\, 1.02)$  & $(0.00, 1.04)$  \\ \hline
		BG & $(0.00 ,\,  1.00)$ & $(0.00 ,\,  1.00)$   &  $(0.00,\,1.00)$  &  $(0.00, 1.00)$   \\ \hline
	\end{tabular}\label{table:comp1}
\end{table}
\begin{table}[!tb]
	\centering\footnotesize
	\caption{
		The summary statistics for each setting are represented, where $k_0=10$ and $[A_{0,\min},A_{0,\max}]=[0.1, 0.2]$. 
	}\vspace{.15cm}
	\begin{tabular}{c c c c c}
		\hline 
		& \multicolumn{4}{c}{$\big(\, \what{p}_0 , \,\, \what{k}_0 \,\big)$} \\ 
		& $(n=70, p=100)$ & $(n=70, p=200)$ & $(n=200, p=100)$ & $(n=200, p=200)$ \\ \hline
		BBS & $(0.94 ,\, 10.02)$ & $(0.96, \, 10.06)$  & $(1.00,\, 10.00)$    & $(1.00,\, 10.00)$  \\ \hline
		BA1  & $(0.72 ,\, 9.66)$ & $(0.98,\, 10.14)$  &   $(1.00,\, 10.00)$  & $(0.98,\, 10.06)$ \\ \hline
		BA2  & $(0.74 ,\, 9.72)$ & $(0.98,\, 10.14)$  &  $(1.00,\, 10.00)$   & $(0.96,\, 10.08)$  \\ \hline
		LL & $(0.00 ,\,  2.12)$ & $(0.00 ,\,  2.86)$   &  $(0.00,\, 3.36)$  & $(0.00, 4.00)$  \\ \hline
		BG & $(0.00 ,\,  1.00)$ & $(0.00 ,\,  1.00)$   &  $(0.00,\,1.00)$  &  $(0.00, 1.00)$   \\ \hline
	\end{tabular}\label{table:comp2}
\end{table}
\begin{figure*}[!tb]
	\centering
	\includegraphics[width=7.5cm,height=6.cm]{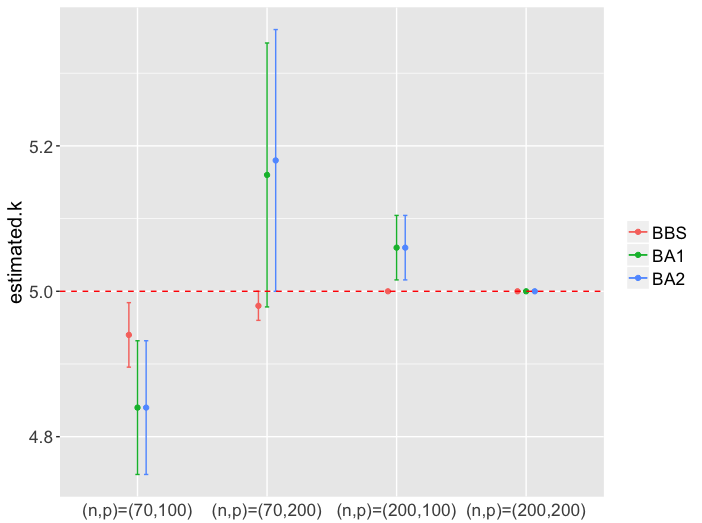}
	\includegraphics[width=7.5cm,height=6.cm]{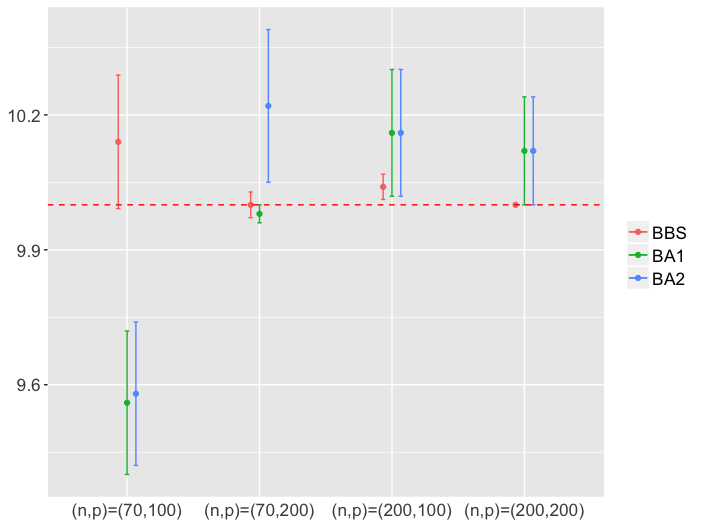}
	\includegraphics[width=7.5cm,height=6.cm]{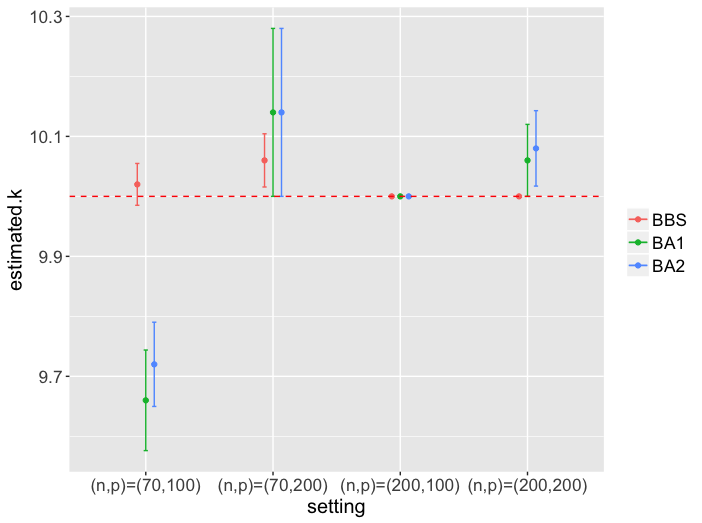}
	\includegraphics[width=7.5cm,height=6.cm]{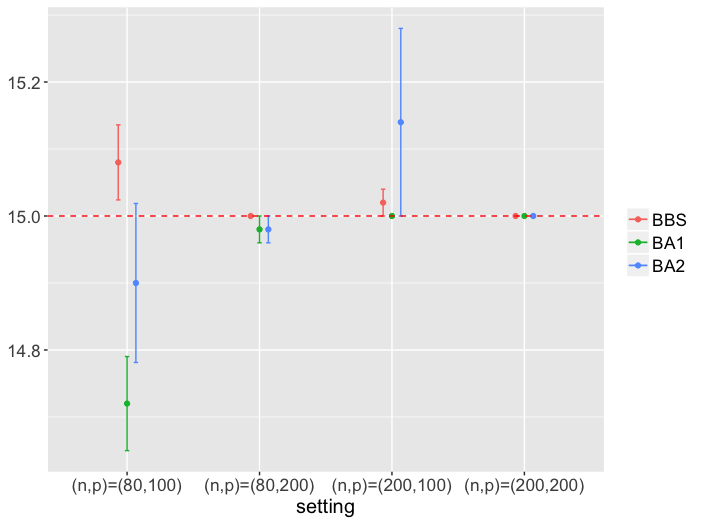}
	\caption{The summary plots for estimated bandwidth in various settings. 
	The middle dot and bar represent the mean and standard error of the mean, respectively, based on 50 simulations.
	$[A_{0,\min},A_{0,\max}]=[0.1, 0.1]$ was used for the top row, while $[A_{0,\min},A_{0,\max}]=[0.1, 0.2]$ was used for the bottom row.
	The red dashed line is the true bandwidth.}
	\label{fig:comparison}
\end{figure*}

Performance of each method were evaluated by the proportion of correct detections of $k_0$, $\what{p}_0 = \sum_{s=1}^{50}I(\hat{k}_0^{(s)} = k_0)/50 $, and averaged bandwidth estimate, $\what{k}_0 = \sum_{s=1}^{50}\hat{k}_0^{(s)} /50$, where $\hat{k}_0^{(s)}$ is the estimated bandwidth for the $s$th data set.
Our method, the BBS, consistently outperformed other competitors in most settings.
The bandwidth selection procedures of \cite{an2014hypothesis} worked reasonably well for large $n$ and large $p$ cases, but it seems somewhat unstable when $(n=70, p=100)$.
Although \cite{lee2017estimating} is slightly better than \cite{banerjee2014posterior},  both of them consistently underestimated the true bandwidth $k_0$.
The proposed prior $\pi(k) \propto \exp(-k^4)$ seems to be too strong to put sufficient masses near the true bandwidth $k_0$ especially when $k_0$ is not small.
Figure \ref{fig:comparison} shows the bandwidth selection results of BBS and the test in \cite{an2014hypothesis} to compare the performance of the two methods at a glance.
As shown in Tables \ref{table:comp1} and \ref{table:comp2}, the BBS outperformed the bandwidth tests in \cite{an2014hypothesis} in most cases.

\subsection{Telephone Call Center Data}\label{subsec:realdata}

We illustrate the performance of the proposed method using the telephone call center data previously analyzed by \cite{huang2006covariance}, \cite{bickel2008regularized} and \cite{an2014hypothesis}.
The phone calls were recorded from 7:00 am until midnight from a call center of a major U.S. financial organization.
The data were collected for  239 days in 2002 except holidays, weekends and days when the recording system did not work properly.
The number of calls were counted for every 10 minutes, and a total of 102 intervals were obtained on each day.
We denote the number of calls on the $j$th time interval of the $i$th day as $N_{ij}$ for each $i=1,\ldots, 239$ and $j=1,\ldots,102$.
As in \cite{huang2006covariance}, \cite{bickel2008regularized} and \cite{an2014hypothesis}, a transformation $X_{ij} = \sqrt{N_{ij} + 1/4}$ was applied to make the data close to the random sample from normal distribution.
The transformed data set was centered.
For more details about the data set, see \cite{huang2006covariance}.

We are interested in predicting the number of phone calls from the $52$nd to $102$nd time intervals using the previous counts on each day.
The best linear predictor of $X_{ij}$ from $X_{i}^j = (X_{i1},\ldots, X_{i,j-1})^T$,
\bean\label{bestlinear}
\what{X}_{ij} &=& \mu_j + \sg_{(j, 1:(j-1))} \big[\sg_{(1:(j-1), 1:(j-1))} \big]^{-1} (X_{i}^j  - \mu^j )    ,
\eean
was used to predict $X_{ij}$ for each $j=52,\ldots, 102$, where $\mu_j = \bbE(X_{1j})$, $\mu^j = (\mu_1,\ldots, \mu_{j-1})^T$ and $\sg_{S_1, S_2}$ is a sub-matrix of $\sg$ consisting of the $S_1$th rows and the $S_2$th columns for given index sets $S_1$ and $S_2$.
We used the first 205 days $(i=1,\ldots,205)$ as a training set and the last 34 days $(i=206,\ldots,239)$ as a test set.
To calculate the best linear predictor  \eqref{bestlinear}, the unknown parameters are need to be estimated.
Because it is reasonable to assume the existence of the natural (time) ordering, we plugged the estimators $\what{\mu}^j =  \sum_{i=1}^{205} X_{i}^j /205$ and $\what{\sg}_k = \big\{ (I_p - \what{A}_{nk} )^T \what{D}_{nk}^{-1} (I_p - \what{A}_{nk} ) \big\}^{-1}$ into \eqref{bestlinear}, where $\what{A}_{nk}$ and $\what{D}_{nk}$ are estimators based on the training set.

We applied the proposed methods in this paper, \cite{an2014hypothesis} and \cite{bickel2008regularized} to estimate the bandwidth $k$ using the training set, and compared the prediction errors $\text{PE}_j = \sum_{i=206}^{239} |\what{X}_{ij} - X_{ij}|/34$ for each $j=52,\ldots,102$.
We defined the average of prediction errors, $\sum_{j=52}^{102} \text{PE}_j/51$ to illustrate the performance of estimated bandwidths.
For a fair comparison, we used the same estimator $\what{\sg}_k$ and only chose different bandwidths depending on the selection procedure.
Since the goal of the analysis is prediction, the average of  prediction errors using training set was used as the criterion for CV.
Based on the selected hyperparameter, our method, the BBS, gives the estimated bandwidth $\what{k} = 5$.
Algorithms 1 and 2 with $\alpha=0.01$ in \cite{an2014hypothesis} determined the bandwidth as 8 and 10, respectively, and \cite{bickel2008regularized} selected the bandwidth as 19 based on a resampling scheme proposed in their paper.
The average of prediction errors were $0.5347, 0.5474, 0.5568$ and $0.5609$ at bandwidth $k=5$, $8$, $10$ and $19$, respectively.
Note that if we use the sample covariance matrix instead of the banded estimator $\what{\sg}_k$, it gives the average prediction error 0.7008.
Thus, the banded estimator of $\sg$ benefits in this case, and our bandwidth estimate yields smaller average prediction error compared with other procedures.
\begin{figure*}[!tb]
	\centering
	\includegraphics[width=11cm,height=6cm]{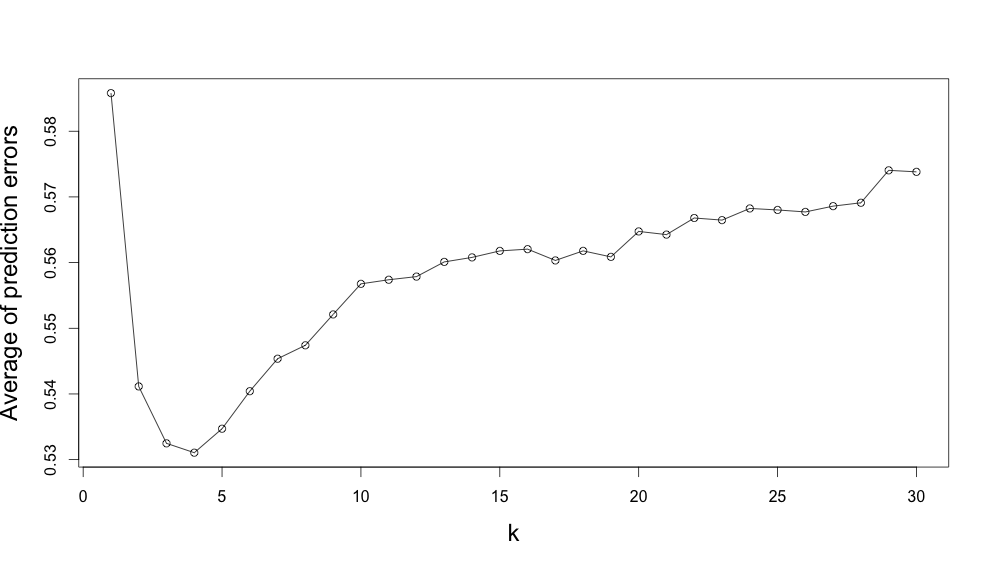}
	\vspace{-.5cm}
	\caption{The averages of prediction errors are represented for various bandwidth values $k$.}
	\label{fig:EPplot}
\end{figure*} 
Figure \ref{fig:EPplot} represents the averages of prediction errors for various bandwidth values $k$.
The minimum error is attained at $k=4$.
None of the above methods achieves the optimal bandwidth $k=4$, but the bandwidth obtained from our method is closest to 4.

\section{Discussion}\label{sec:discussion}


Throughout the paper, we assumed that each row of the Cholesky factor has the same bandwidth for simplicity.
It can be extended to more general setting allowing different bandwidth for each row.
If we denote the bandwidth for the $j$th row as $k_{0}^{(j)}$ and $k_{0,\max} = \max_{1\le j \le p} k_{0}^{(j)}$, then one can conduct the bandwidth test for $k_{0,\max}$.
Theoretical results in this paper also hold for the maximum bandwidth $k_{0,\max}$ selection problem  with possibly some additional conditions.
For example, if $k_0^{(j)} = k_{0,\max}$ except only finite $j$'s, then the proposed priors still achieve the theoretical properties in Section \ref{sec:main}.

The bandwidth selection problem for \emph{bandable matrices} is one of the interesting future research topics.
Note that it has very different characteristics from that for banded matrices. 
In the bandable case, the bandwidth selection is to find the optimal bandwidth minimizing the estimation error with respect to some loss function.
It is well known that the optimal bandwidth depends on the loss function \citep{cai2010optimal}.
Thus, if the bandwidth selection of the bandable matrix is of primary interest, the prior distribution should be chosen carefully depending on the loss function.

\section*{Acknowledgement}

We thank Baiguo An for providing us the telephone call center data.



\appendix


\section*{Appendix 1: Posterior convergence rate for the Cholesky factor}\label{sec:est_chol}

The estimation of Cholesky factor is important to detect the dependence structure of data. Although our primary goals are Bayesian bandwidth test and model selection,  we show in this section that the proposed prior can be used to estimate Cholesky factors.
Theorem \ref{theorem:band_sel} implies that $\what{k} = \argmax_{0\le k\le R_n} \pi(k \mid \bfX_n)$ is a consistent estimator of $k_0$.
Consider an empirical Bayes approach by considering priors \eqref{prior1} and \eqref{prior2} with $\what{k}$ instead of imposing a prior on $k$.
This empirical Bayes method faciliates easy implementations when the estimation of Cholesky factor or precision matrix is of interest.
To assess the performance, we adopt the {\it P-loss convergence rate} used by \cite{castillo2014bayesian} and \cite{lee2018optimal}.
Corollary \ref{cor:Ploss_rate} presents the P-loss convergence rate of the empirical Bayes approach with respect to the Cholesky factor under the matrix $\ell_\infty$-norm.
We denote $\pi_{(k)}$ as the empirical prior stated above and $\bbE^{\pi_{(k)}}(\cdot \mid \bfX_n)$ as the posterior expectation induced the prior $\pi_{(k)}$.

\begin{corollary}\label{cor:Ploss_rate}
	Consider  model \eqref{model} and priors \eqref{prior1} and \eqref{prior2} with $\what{k}$ instead of $k$.
	If conditions (A1)--(A4) are satisfied and $k_0 + \log p = o(n)$, then we have
	\bea
	\bbE_0 \bbE^{\pi_{(k)}} \Big( \| A_{n} - A_{0n} \|_\infty  \mid \bfX_n \Big) &\lesssim& \left(\frac{k_0( k_0 + \log p)}{n} \right)^{\frac{1}{2}} .
	\eea
\end{corollary}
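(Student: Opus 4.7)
The plan is to leverage the bandwidth selection consistency of Theorem \ref{theorem:band_sel} to reduce the analysis to the event $E = \{\what{k} = k_0\}$, whose $\bbP_0$-probability tends to one. On $E$, both the posterior draw $A_n$ and the truth $A_{0n}$ have at most $k_0$ nonzero entries per row, so the matrix $\ell_\infty$-norm is a maximum of row-sums of length $k_0$. Writing $\what{A}_n$ for the row-wise OLS estimator whose $j$th row is $\what{a}_j^{(k_0)}$, the triangle inequality and Cauchy–Schwarz give
$$
\|A_n - A_{0n}\|_\infty \,\le\, \sqrt{k_0}\, \max_{2\le j \le p} \|a_j^{(k_0)} - \what{a}_j^{(k_0)}\|_2 \,+\, \sqrt{k_0}\,\max_{2\le j \le p}\|\what{a}_j^{(k_0)} - a_{0,j}^{(k_0)}\|_2,
$$
splitting the error into a posterior-spread term and an OLS sampling-error term. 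It therefore suffices to establish an $O_p\big(\sqrt{(k_0+\log p)/n}\big)$ bound on each maximum.

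For the OLS term, on $E$ one has $\what{a}_j^{(k_0)} - a_{0,j}^{(k_0)} = (\bfX_{j(k_0)}^T \bfX_{j(k_0)})^{-1}\bfX_{j(k_0)}^T \epsilon_j$ with $\epsilon_j \sim N_n(0,d_{0,j} I_n)$ independent of $\bfX_{j(k_0)}$. Conditional on $\bfX_{j(k_0)}$, $\|\what{a}_j^{(k_0)} - a_{0,j}^{(k_0)}\|_2^2$ is a Gaussian quadratic form, and standard $\chi^2$-tail bounds (as in Lemma 3 of \cite{yang2016computational}) combined with the eigenvalue control of $\bfX_{j(k_0)}^T\bfX_{j(k_0)}/n$ implied by condition (A1) give $\|\what{a}_j^{(k_0)} - a_{0,j}^{(k_0)}\|_2^2 \lesssim (k_0 + t)/(n \epsilon_{0n})$ with probability $1-2e^{-t}$; choosing $t \asymp \log p$ and taking a union bound over $j$ yields the desired rate. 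For the posterior-spread term, conditional on $\bfX_n$, $d_j$ and $\what{k}=k_0$, the vector $a_j^{(k_0)} - \what{a}_j^{(k_0)}$ is Gaussian with covariance $d_j(1+\gamma)^{-1}(\bfX_{j(k_0)}^T \bfX_{j(k_0)})^{-1}$; since $d_j \mid \bfX_n, \what{k}=k_0 \sim IG((1-\tau)n/2,\, n\what{d}_j^{(k_0)}/2)$ with $\what{d}_j^{(k_0)}$ concentrated around $d_{0,j}$, the same $\chi^2$-tail and union-bound argument produces the same order of bound after taking posterior expectation in $d_j$ and then in $a_j^{(k_0)}$.

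Outside $E$, Theorem \ref{theorem:band_sel} and the exponentially small rates in \eqref{size_cond}–\eqref{size_cond2} control $\bbE_0[\pi(\what{k}\neq k_0 \mid \bfX_n)]$, so combining this with a crude polynomial-in-$(n,p)$ moment bound on $\bbE^{\pi_{(k)}}(\|A_n\|_\infty\mid \bfX_n)$ (obtained from inverse-Wishart tails of $(\bfX_{j(\what{k})}^T\bfX_{j(\what{k})})^{-1}$ under condition (A3)) shows that the $E^c$ contribution is negligible. The main obstacle is the uniform-in-$j$ handling: one must simultaneously control the minimum eigenvalue of each $\bfX_{j(k_0)}^T\bfX_{j(k_0)}/n$, the inverse-gamma posterior moments of each $d_j$, and the Gaussian concentration for both pieces, all with a union bound over the $p$ row indices. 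The hypothesis $k_0 + \log p = o(n)$ is precisely what makes these concentration inequalities applicable uniformly and lets the final rate be $\sqrt{k_0(k_0+\log p)/n}$ as claimed.
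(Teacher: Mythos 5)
Your proposal is correct and follows essentially the same route as the paper's proof: both decompose the risk over the event $\{\what{k}=k_0\}$ and its complement, use the exponentially small misselection probability from Theorem \ref{theorem:band_sel} (via Markov's inequality and Cauchy--Schwarz) to make the off-event contribution negligible, and bound the on-event term by row-wise posterior concentration at rate $\sqrt{k_0(k_0+\log p)/n}$. The only difference is that the paper outsources the on-event and second-moment bounds to Lemmas 2 and 4 of \cite{lee2017estimating}, whereas you sketch the underlying $\chi^2$/Gaussian concentration and union-bound arguments directly (noting that your explicit version carries $\epsilon_{0n},\zeta_{0n}$ factors that the stated rate absorbs into constants).
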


Define a class of precision matrices
\bea
{\cal{U}}_p \,\,=\,\, {\cal{U}}_p(\epsilon_{0n}, \zeta_{0n}, M_{\rm bm}, \tau, R_n)  &=& \bigg\{ \Omega \in \calC_p: \,\, \Omega \text{ satisfies (A1)}--(A3)\,\,  \bigg\} ,
\eea
where $\calC_p$ is the class of $p\times p$ symmetric positive definite matrices.
With a slight modification of Example 13.12 in an unpublished lecture note of John Duchi, the minimax lower bound is given by
\bea
\inf_k \inf_{\what{A}_{nk}} \sup_{\Omega_{0n} \in {\cal{U}}_p} \bbE_0  \Big( \| \what{A}_{nk} - A_{0n} \|_\infty \Big) &\gtrsim& \frac{k_0}{\sqrt{n}},
\eea
where the second infimum is taken over all estimators with bandwidth $k$.
Thus, the above empirical Bayes approach achieves nearly optimal P-loss convergence rate.

\section*{Appendix 2: Proofs}\label{sec:proofs}


\begin{proof}[Proof of Theorem \ref{theorem:band_sel}]
	For any $k=0,1,\ldots,R_n$, we have
	\bea
	\pi(k\mid \bfX_n) &\le& \frac{\pi(k\mid \bfX_n)}{\pi(k_0 \mid \bfX_n)} = \frac{\pi(k)}{\pi(k_0)} \prod_{j=2}^p \left(1+ \frac{1}{\gamma} \right)^{-\frac{k_j - k_{0j}}{2}} \left(\frac{\what{d}_{j}^{(k)} }{\what{d}_{j}^{(k_0)}} \right)^{-\frac{(1-\tau)n}{2}}.
	\eea
	Note that
	\bean\label{kneq_into_twoparts}
	\bbE_0 \Big[ \pi(k \neq k_0 \mid \bfX_n) \Big] &=& \sum_{k>k_0} \bbE_0 \Big[ \pi(k  \mid \bfX_n)  \Big]
	+ \sum_{k <k_0} \bbE_0 \Big[ \pi(k  \mid \bfX_n) \Big].
	\eean
	We will show that the right hand side terms in \eqref{kneq_into_twoparts} are of order $o(1)$.
	Because $\what{d}_j^{(k)}/ \what{d}_{j}^{(k_0)} \sim Beta ((n-k_j)/2, (k_j-k_{0j})/2)$ for any $k>k_0$, the first term in \eqref{kneq_into_twoparts} is bounded above by
	\bea
	&&\frac{\pi(k)}{\pi(k_0)} \prod_{j=2}^p \left(1+ \frac{1}{\gamma} \right)^{-\frac{k_j - k_{0j}}{2}} \cdot  \frac{\Gamma\big(\frac{n-k_{0j}}{2} \big) \Gamma\big(\frac{\tau n-k_{j}}{2} \big) }{\Gamma\big(\frac{n-k_{j}}{2} \big) \Gamma\big(\frac{\tau n-k_{0j}}{2} \big) } \\
	&\le& \frac{\pi(k)}{\pi(k_0)} \prod_{j=2}^p  \left(1+ \frac{1}{\gamma} \right)^{-\frac{k_j - k_{0j}}{2}} \left( \frac{1 - (k_{0j}+2)n^{-1} }{\tau - k_j n^{-1}} \right)^{\frac{k_j - k_{0j}}{2}}
	\eea 
	provided that $\tau n > R_n$.
	By condition (A3), we have $R_n/n \le \tau \epsilon_1/(1+\epsilon_1)$, 
	\bea
	\frac{1 - (k_{0j}+2)n^{-1} }{\tau - k_j n^{-1}} &\le& \frac{1+\epsilon_1}{\tau},
	\eea
	which implies 
	\bea
	\sum_{k>k_0} \bbE_0 \Big[ \pi(k  \mid \bfX_n)  \Big]
	&\le& \sum_{k>k_0}  \frac{\pi(k)}{\pi(k_0)} \left\{ \Big(1+ \frac{1}{\gamma} \Big)\cdot \Big( \frac{\tau}{1+\epsilon_1} \Big) \right\}^{- \frac{k-k_0}{2}\cdot (p - \frac{k+k_0+1}{2} ) }.
	\eea
	The last display is of order $o(1)$ by \eqref{size_cond}.
	
	It is easy to check that
	\bean\label{ineqQjk}
	\left(\frac{\what{d}_{j}^{(k)} }{\what{d}_{j}^{(k_0)}} \right)^{-\frac{(1-\tau)n}{2}}   
	&\le& \exp \left( \frac{1-\tau}{2d_{0j}(1 + \what{Q}_{jk} )}\cdot n(\what{d}_j^{(k_0)} - \what{d}_j^{(k)}) \right) , 
	\eean
	where $\what{Q}_{jk} = \what{d}_{j}^{(k_0)}/d_{0j} -1 + (\what{d}_j^{(k)} - \what{d}_j^{(k_0)})/d_{0j}$.
	To deal with $\what{d}_j^{(k)}$ and $\what{Q}_{jk}$ easily, for a given constant $\epsilon = (\tau/10)^2$, we define the following sets 
	\bea
	N_j^c &=& \left\{ \bfX_n : \zeta_{0n}^{-1} (1-2\epsilon)^2 \le n^{-1}\lambda_{\min}(\bfX_{j(k_0)}^T \bfX_{j(k_0)}) \le n^{-1}\lambda_{\max}(\bfX_{j(k_0)}^T \bfX_{j(k_0)}) \le \epsilon_{0n}^{-1}(1+2\epsilon)^2  \right\}  , \\
	N_{1,j}^c &=& \left\{ \bfX_n: \bigg|\frac{\what{d}_j^{(k_0)}}{d_{0j}} -1 \bigg| \in \Big( -4 \sqrt{\epsilon} \frac{n-k_{0_j}}{n} - \frac{k_{0j}}{n}, 4\sqrt{\epsilon}\frac{n-k_{0_j}}{n} - \frac{k_{0j}}{n}  \Big)  \right\}, \\
	N_{2,j,k}^c &=& \left\{ \bfX_n : 0< \frac{\what{d}_j^{(k)} - \what{d}_j^{(k_0)} }{d_{0j}} < \epsilon + \frac{\what{\lambda}_{jk}}{n} \right\} 
	\eea
	and $N_{j,k}^c = N_j^c \cap N_{1,j}^c \cap N_{2,j,k}^c$, where $\what{\lambda}_{jk} = \|(I_n - \tilde{P}_{jk})\bfX_{j(k_0)} a_{0j}^{(k_0)}\|_2^2 / d_{0j}$.
	First, we will show that the above sets have probabilities tending to 1 as $n\to\infty$.
	Note that $n\what{d}_j^{(k_0)}/d_{0j} \sim \chi^2_{n-k_{0j}}$ and $n(\what{d}_j^{(k)} - \what{d}_j^{(k_0)})/d_{0j} \sim \chi^2_{k_{0j}-k_j}(\what{\lambda}_{jk})$, where $\chi^2_m(\lambda)$ denotes the noncentral chi-square distribution with degrees of freedom $m$ and the noncentrality parameter $\lambda$ and $\chi_m^2 = \chi_m^2(0)$.
	By Corollary 5.35 in \cite{eldar2012compressed}, $\bbP_0(N_j) \le 4 \exp(-n \epsilon^2/2)$ for all sufficiently large $n$.
	From the concentration inequality of chi-square random variable (Lemma 1 in \cite{laurent2000adaptive}), it is easy to see that $\bbP_0(N_{1,j}) \le 2 \exp(-\epsilon(n-k_0))$ for all sufficiently large $n$.
	Finally, by Lemma 4 in \cite{shin2018scalable}, we have 
	\bea
	\bbP_0 (N_{2,j,k}) &\lesssim&  \Big(\frac{\epsilon n}{2(k_{0j} - k_j)} \Big)^{\frac{k_{0j}-k_j}{2}} \exp\Big({\frac{k_{0j}-k_j}{2}- \frac{\epsilon n}{2} }\Big) + \bbE_0 \bigg(\, \frac{\what{\lambda}_{jk}}{\epsilon n} e^{-\frac{\epsilon^2 n^2}{32 \what{\lambda}_{jk} }}  \wedge 1 \,\bigg) \\
	&\le& \exp\Big({-\frac{\epsilon n}{4}}\Big) + \bbE_0 \bigg[\, \frac{\what{\lambda}_{jk}}{\epsilon n} \exp\bigg({-\frac{\epsilon^2 n^2}{32 \what{\lambda}_{jk} }}\bigg) \cdot I(N_{j}^c)\,\bigg] + \bbP_0(N_j) \\
	&\le& \exp\Big({-\frac{\epsilon n}{4}}\Big)  +   \exp\Big({ - \frac{\epsilon^2 \, \zeta_{0n}^{-1} \epsilon_{0n} n}{128 (1+2\epsilon)^2 } }\Big)    + 4\exp\Big({-\frac{\epsilon^2 n}{2}}\Big) ,
	\eea
	which is of order $o(1)$ provided that $\zeta_{0n}/\epsilon_{0n} = o(n)$.
	The last inequality holds because 
	\bea
	\what{\lambda}_{jk} &\le& \lambda_{\max}(\bfX_{j(k_0)}^T \bfX_{j(k_0)}) \cdot d_{0j}^{-1}  \|a_{0j}\|_2^2 \\
	&\le& 2n (1+2\epsilon)^2 \epsilon_{0n}^{-1} \cdot \Big\{ \| d_{0j}^{-1/2}(e_j - a_{0j})\|_2^2 + d_{0j}^{-1}  \Big\} \\
	&\le& 2n (1+2\epsilon)^2 \epsilon_{0n}^{-1} \cdot \zeta_{0n}
	\eea 
	on $N_j^c$ for all sufficiently large $n$, where $e_j$ is the unit vector whose $j$th element is 1 and the others are zero.
	Note that
	\bean
	\sum_{k <k_0} \bbE_0 \Big[ \pi(k  \mid \bfX_n) \Big]
	&\le& \sum_{k <k_0} \sum_{j=2}^p \bbP_0 (N_{j,k}) \label{P0_Nsets} + \sum_{k <k_0} \bbE_0 \Big[\pi(k\mid \bfX_n) \prod_{j=2}^p I( N_{j,k}^c) \Big]. \label{P0_Ncomp}
	\eean
	By the above arguments, \eqref{P0_Nsets} is of order $o(1)$ provided that $\zeta_{0n} \log p/\epsilon_{0n} = o(n)$.
	Thus, the proof is completed if we prove that \eqref{P0_Ncomp} is of order $o(1)$.
	
	From the inequality \eqref{ineqQjk}, 
	\bea
	&& \sum_{k <k_0} \bbE_0 \Big[\pi(k\mid \bfX_n) \prod_{j=2}^p I( N_{j,k}^c) \Big] \\
	&\le& \sum_{k <k_0} \bbE_0 \left[ \frac{\pi(k)}{\pi(k_0)} \prod_{j=2}^p \Big(1+\frac{1}{\gamma} \Big)^{\frac{k_{0j}-k_j}{2}} \exp\Big( \frac{1-\tau}{2d_{0j}(1+\what{Q}_{jk}) }\cdot n(\what{d}_j^{(k_0)} - \what{d}_j^{(k)}) \Big)  I( N_{j,k}^c) \right].
	\eea
	On the event $N_{j,k}^c$, we have
	\bea
	\what{Q}_{jk} &\le& 4\sqrt{\epsilon}\frac{n-k_0 }{n} - \frac{k_0}{n} + \epsilon + \frac{\what{\lambda}_{jk}}{n} \\
	&\le& 5 \sqrt{\epsilon} + \frac{\what{\lambda}_{jk}}{n} ,\\
	\what{Q}_{jk} &\ge& -4\sqrt{\epsilon} \frac{n-k_0 }{n} - \frac{k_0}{n} \,\,\ge\,\, -5 \sqrt{\epsilon}
	\eea
	for all sufficiently large $n$.
	For a given $k <k_0$,
	\bea
	n( \what{d}_j^{(k_0)} - \what{d}_j^{(k)} )
	&=& \tilde{X}_j^T (\tilde{P}_{jk} - \tilde{P}_{jk_0}) \tilde{X}_j \\
	&\overset{d}{\equiv}& - \|(I_n- \tilde{P}_{jk})\bfX_{j(k_0)} a_{0j}^{(k_0)}\|_2^2 - 2 \tilde{\epsilon}_j^T (I_n - \tilde{P}_{jk})\bfX_{j(k_0)} a_{0j}^{(k_0)} + \tilde{\epsilon}_j^T (\tilde{P}_{jk} - \tilde{P}_{jk_0}) \tilde{\epsilon}_j \\
	&\le& -d_{0j}\what{\lambda}_{jk} - 2 \tilde{\epsilon}_j^T (I_n - \tilde{P}_{jk})\bfX_{j(k_0)} a_{0j}^{(k_0)} \\
	&=:& -d_{0j}\what{\lambda}_{jk} - 2 V_{jk},
	\eea
	where $\tilde{\epsilon}_j \sim N_n(0, d_{0j}I_n)$ and $V_{jk}/\sqrt{d_{0j}} \sim N(0, d_{0j}\what{\lambda}_{jk} )$ under $\bbP_0$ given $\bfX_{j(k_0)}$.
	Then,
	\bea
	&& \bbE_0 \left[ \exp\Big( \frac{1-\tau}{2d_{0j}(1+\what{Q}_{jk}) }\cdot n(\what{d}_j^{(k_0)} - \what{d}_j^{(k)}) \Big)  \mid \bfX_{j(k_0)} \right] I(N_{j,k}^c) \\
	&\le& \bbE_0 \left[ \exp\Big( -\frac{1-\tau}{2d_{0j}(1+\what{Q}_{jk}) }\cdot  (d_{0j}\what{\lambda}_{jk} +2 V_{jk}) \Big)  \mid \bfX_{j(k_0)} \right] I(N_{j,k}^c) .
	\eea
	From the moment generating function of the normal distribution, we have
	\bea
	&& \bbE_0 \left[ \exp\Big( -\frac{1-\tau}{2d_{0j}(1 + Q) }\cdot  (d_{0j}\what{\lambda}_{jk} +2 V_{jk}) \Big)  \mid \bfX_{j(k_0)} \right] I(N_{j,k}^c)\\
	&\le& \exp \Big\{ -\frac{1-\tau}{2(1+Q) }\Big(1 - \frac{1-\tau}{1+Q} \Big)\, \what{\lambda}_{jk}  \Big\} I(N_{j,k}^c) \\
	&\le& \exp \Big\{ -\frac{1-\tau}{2( 1+ 5\sqrt{\epsilon} + \what{\lambda}_{jk}/n ) }\Big(1 - \frac{1-\tau}{1- 5\sqrt{\epsilon}} \Big)\, \what{\lambda}_{jk} \Big\}  I(N_{j,k}^c)
	\eea
	for any $Q=-5\sqrt{\epsilon}$ or $Q=5\sqrt{\epsilon} + \what{\lambda}_{jk}/n$. 
	Note that
	\bea
	d_{0j}\what{\lambda}_{jk}&=& \|(I_n - \tilde{P}_{jk})\bfX_{j(k_0)} a_{0j}^{(k_0)}\|_2^2 \\
	&\ge& \lambda_{\min}(\bfX_{j(k_0)}^T \bfX_{j(k_0)}) (k_{0j} -k_j) \min_{j,l: a_{0,jl}\neq 0}|a_{0,jl}|^2 \\
	&\ge& n \zeta_{0n}^{-1}(1-2\epsilon)^2 (k_{0j} -k_j) \min_{j,l: a_{0,jl}\neq 0}|a_{0,jl}|^2
	\eea
	on $\bfX_n \in N_{j,k}^c$ by Lemma 5 in \cite{arias2014estimation} and $1-(1-\tau)/(1-5\sqrt{\epsilon}) > \tau/2$ by the definition of $\epsilon$.
	Thus, 
	\bea
	&& \frac{1-\tau}{2( 1+ 5\sqrt{\epsilon} + \what{\lambda}_{jk}/n ) }\Big(1 - \frac{1-\tau}{1- 5\sqrt{\epsilon}} \Big)\, \what{\lambda}_{jk} \\
	&\ge& \frac{\tau(1-\tau)}{4} \cdot \Big( \frac{1+5\sqrt{\epsilon}}{\what{\lambda}_{jk}} + \frac{1}{n}  \Big)^{-1}  \\
	&\ge& \frac{\tau(1-\tau)}{4} \cdot \Big( \frac{(1+5\sqrt{\epsilon})\epsilon_{0n}^{-1}\zeta_{0n} }{(1-2\epsilon)^2 (k_{0j}-k_j) \min_{j,l: a_{0,jl}\neq 0}|a_{0,jl}|^2 n  } + \frac{1}{n}  \Big)^{-1} \\
	&\ge& \frac{\tau(1-\tau)}{4} \cdot \Big( \frac{(1+5\sqrt{\epsilon}) }{(1-2\epsilon)^2 (k_{0j}-k_j) C_{\rm bm}} + \frac{1}{n}  \Big)^{-1}  \\
	&\ge& \frac{\tau(1-\tau)}{8} \cdot  \frac{(1-2\epsilon)^2 (k_{0j}-k_j) C_{\rm bm}  }{(1+5\sqrt{\epsilon})}  \\
	&\ge& (k_{0j}-k_j) M_{\rm bm} 
	\eea
	on $\bfX_n \in N_{j,k}^c$ by  condition (A2), the definition of $\epsilon$ and $\tau \le 0.4$, where $C_{\rm bm} = 10\tau^{-1}(1-\tau)^{-1} M_{\rm bm}$.
	It implies that \eqref{P0_Ncomp} is bounded above by
	\bea
	&& \sum_{k <k_0} \frac{\pi(k)}{\pi(k_0)} \prod_{j=2}^p \Big(1+\frac{1}{\gamma}\Big)^{\frac{k_{0j}-k_j}{2}} \cdot 2 (e^{-M_{\rm bm}})^{ k_{0j}-k_j } \\
	&\le& \sum_{k <k_0} \frac{\pi(k)}{\pi(k_0)} \prod_{j=2}^p  \left\{ 2\Big(1+\frac{1}{\gamma}\Big)^{\frac{1}{2}} e^{-M_{\rm bm}} \right\}^{ k_{0j}-k_j } \\
	&=& \sum_{k <k_0} \frac{\pi(k)}{\pi(k_0)}  \left\{ 2\Big(1+\frac{1}{\gamma}\Big)^{\frac{1}{2}} e^{-M_{\rm bm}} \right\}^{ (k_{0}-k) \big(p - \frac{k+k_0+1}{2} \big) } ,
	\eea
	which is of order $o(1)$ provided that \eqref{size_cond2}.

\end{proof}

\begin{proof}[Proof of Theorem \ref{theorem:one_band}]
	Note that
	\bea
	B_{10}(\bfX_n) &=& \frac{p(\bfX_n \mid H_1)}{p(\bfX_n \mid H_0)}  \\
	&=& \frac{ \sum_{k > k^\star} \int p(\bfX_n \mid \Omega_n, k) \pi (\Omega_n \mid k)\pi_1(k) d\Omega_n }{\sum_{k \le k^\star} \int p(\bfX_n \mid \Omega_n, k) \pi (\Omega_n\mid k) \pi_0(k) d\Omega_n}  \\
	&=& \frac{ \sum_{k > k^\star}  p(\bfX_n \mid k) \pi_1(k)  }{\sum_{k \le k^\star} \int p(\bfX_n \mid k)  \pi_0(k) } \\
	&=& \frac{\pi(k >k^\star\mid \bfX_n)}{\pi(k \le k^\star \mid \bfX_n)}\times \frac{C_0}{C_1}
	\eea
	and $C_0/C_1 = k^\star / (R_n- k^\star)$.
	
	If $H_0: k \le k^\star$ is true, i.e. $k_0 \le k^\star$,
	\bea
	\bbE_0 \Big[ B_{10}(\bfX_n)\Big] &\le& \bbE_0 \Big[ \frac{\pi(k > k^\star \mid \bfX_n)}{\pi(k = k_0\mid \bfX_n)} \Big] \times \frac{k^\star}{R_n- k^\star} \\
	&\le& \sum_{k > k^\star} \bbE_0 \Big[ \frac{\pi(k \mid \bfX_n) }{\pi(k_0\mid \bfX_n)}\Big] \times \frac{k^\star}{R_n- k^\star} \\
	&\le& k^\star \cdot \left\{ \frac{(1+\gamma)\tau}{\gamma(1+\tau)} \right\}^{- \frac{k^\star+1-k_0}{2}\cdot (p - \frac{R_n+k_0+1}{2} ) } = T_{n,H_0,k_0,k^\star},
	\eea
	which implies
	\bea
	B_{10}(\bfX_n)
	&=& O_p \big( T_{n,H_0,k_0,k^\star} \big) \quad\text{ under } H_0.
	\eea
	
	On the other hand, if $H_1: k > k^\star$ is true, i.e. $k_0 > k^\star$, 
	\bea
	&& \bbE_0 \Big[ \frac{\sum_{k \le k^\star} \pi(k\mid \bfX_n) \prod_{j=2}^p I(N_{j,k}^c)}{\pi(k_0 \mid \bfX_n)} \Big] \times \frac{R_n-k^\star}{k^\star} \\
	&\le& \left\{ \frac{4(1+\gamma)}{\gamma \exp(2M_{\rm bm}) }  \right\}^{\frac{k_0-k^\star}{2}\cdot (p - \frac{k^\star+k_0+1}{2} ) } \times (R_n-k^\star) \\
	&=& T_{n,H_1,k_0,k^\star} .
	\eea 
	Now, we will show that for every $\epsilon>0$, there exist a constant $C>0$ and an integer $N$ such that 
	\bea
	\bbP_0 \Big( B_{10}(\bfX_n) < C^{-1} T_{n,H_1,k_0,k^\star}^{-1}  \Big)
	&\le& \epsilon
	\eea
	for all $n \ge N$ under $H_1$, which implies $B_{10}^{-1}(\bfX_n) = O_p(T_{n,H_1,k_0,k^\star})$ under $H_1$.
	Note that 
	\bea
	&& \bbP_0 \Big( B_{10}(\bfX_n) \le C^{-1} T_{n,H_1,k_0,k^\star}^{-1}  \Big) \\
	&=& \bbP_0 \Big( B_{10}(\bfX_n)^{-1} \ge C T_{n,H_1,k_0,k^\star}  \Big) \\
	&\le& \bbP_0 \Big( \frac{\sum_{k \le k^\star} \pi(k\mid \bfX_n) }{\pi(k_0 \mid \bfX_n) }  \ge  \frac{k^\star C T_{n,H_1,k_0,k^\star}}{R_n-k^\star} \Big) .
	\eea
	Let $N_{j,k}$ be the set defined in the proof of Theorem \ref{theorem:band_sel},	then the last term is bounded above by
	\bea
	&& \bbP_0 \Big( \frac{ \sum_{k \le k^\star}\pi(k\mid \bfX_n)\prod_{j=2}^p I(N_{j,k}^c) }{\pi(k_0 \mid \bfX_n) }   \ge  \frac{k^\star C T_{n,H_1,k_0,k^\star} }{R_n-k^\star}  \Big) + \sum_{k \le k^\star} \sum_{j=2}^p \bbP_0(N_{j,k}) \\
	&\le& \bbE_0 \Big[ \frac{\sum_{k \le k^\star} \pi(k\mid \bfX_n) \prod_{j=2}^p I(N_{j,k}^c)}{\pi(k_0 \mid \bfX_n)} \Big] \times \frac{R_n- k^\star}{k^\star C T_{n,H_1,k_0,k^\star} } + o(1) \\
	&\le& \frac{1}{C} + o(1).
	\eea
	Thus, it completes the proof. 
\end{proof}

\begin{proof}[Proof of Theorem \ref{theorem:two_band}]
	It is easy to see that
	\bea
	B_{10}(\bfX_{n_1}, \bfY_{n_2}) &=& \frac{p(\bfX_{n_1}, \bfY_{n_2} \mid H_1)}{p(\bfX_{n_1}, \bfY_{n_2} \mid H_0)}\\
	&=& \frac{\sum_{k_1\neq k_2} \pi(k_1\mid \bfX_{n_1})\pi(k_2\mid \bfY_{n_2})  }{\sum_{k_1= k_2} \pi(k_1\mid \bfX_{n_1})\pi(k_2\mid \bfY_{n_2})} \times R_n^{-1} .
	\eea
	
	If $H_0: k_1=k_2$ is true, let $k_{01}=k_{02}=k_0$, then
	\bea
	B_{10}(\bfX_{n_1}, \bfY_{n_2})
	&\le& \frac{\sum_{k_1\neq k_2} \pi(k_1\mid \bfX_{n_1})\pi(k_2\mid \bfY_{n_2})  }{ \pi(k_0\mid \bfX_{n_1})\pi(k_0\mid \bfY_{n_2})} \times R_n^{-1} \\
	&\le& \sum_{k_1=1}^{R_n} \frac{\pi(k_1\mid \bfX_{n_1})}{\pi(k_0\mid \bfX_{n_1})} \max_{k_2\neq k_1} \frac{\pi(k_2\mid \bfY_{n_2})}{\pi(k_0\mid \bfY_{n_2})} .
	\eea
	Note that
	\bea
	\frac{\pi(k_0\mid \bfX_{n_1})}{\pi(k_0\mid \bfX_{n_1})} \max_{k_2\neq k_0} \frac{\pi(k_2\mid \bfY_{n_2})}{\pi(k_0\mid \bfY_{n_2})} 
	&=& \max_{k_2\neq k_0} \frac{\pi(k_2\mid \bfY_{n_2})}{\pi(k_0\mid \bfY_{n_2})}  \\
	&=& O_p \left( \frac{1}{R_n-k_{0}} T_{n, H_1, k_0, k_0-1} + \frac{1}{k_0} T_{n, H_0, k_0,k_0}    \right) 
	\eea
	and
	\bea
	&& \sum_{k_1\neq k_0} \frac{\pi(k_1\mid \bfX_{n_1})}{\pi(k_0\mid \bfX_{n_1})} \max_{k_2\neq k_1} \frac{\pi(k_2\mid \bfY_{n_2})}{\pi(k_0\mid \bfY_{n_2})} \\
	&=& O_p \left( \frac{k_0}{R_n-k_{0}} T_{n,H_1,k_0, k_0-1} + \frac{R_n-k_0}{k_0} T_{n,H_0, k_0, k_0}   \right).
	\eea
	Thus, we have under $H_0$,
	\bea
	B_{10}(\bfX_{n_1}, \bfY_{n_2})
	&=&  O_p \left( \frac{k_0}{R_n-k_{0}} T_{n,H_1,k_0, k_0-1} + \frac{R_n-k_0}{k_0} T_{n,H_0, k_0, k_0}   \right).
	\eea
	
	On the other hand, if $H_1:k_1\neq k_2$ is true, 
	\bea
	B_{10}(\bfX_{n_1}, \bfY_{n_2})^{-1}
	&\le& \frac{\sum_{k_1= k_2} \pi(k_1\mid \bfX_{n_1})\pi(k_2\mid \bfY_{n_2})}{\pi(k_{01}\mid \bfX_{n_1})\pi(k_{02}\mid \bfY_{n_2})} \times R_n.
	\eea
	Note that 
	\bea
	\frac{\pi(k_{01}\mid \bfX_{n_1})\pi(k_{01}\mid \bfY_{n_2})}{\pi(k_{01}\mid \bfX_{n_1})\pi(k_{02}\mid \bfY_{n_2})} \times R_n 
	&=& \frac{ \pi(k_{01}\mid \bfY_{n_2})}{ \pi(k_{02}\mid \bfY_{n_2})} \times R_n  \\
	&=& O_p \left( \frac{R_n}{R_n-k_{01}} T_{n, H_1, k_{02}, k_{01} } + \frac{R_n}{k_{01}} T_{n, H_0, k_{02}, k_{01}}  \right)
	\eea
	and
	\bea
	&& \frac{\sum_{k_1= k_2 \neq k_{01}} \pi(k_1\mid \bfX_{n_1})\pi(k_2\mid \bfY_{n_2})}{\pi(k_{01}\mid \bfX_{n_1})\pi(k_{02}\mid \bfY_{n_2})} \times R_n \\
	&\le& \sum_{k_1 \neq k_{01}} \frac{\pi(k_1\mid \bfX_{n_1})}{\pi(k_{01}\mid \bfX_{n_1})} \max_{k_2} \frac{\pi(k_2\mid \bfY_{n_1})}{\pi(k_{02}\mid \bfY_{n_1})} \times R_n \\
	&=& O_p \left( \frac{R_n \, k_{01}}{R_n-k_{01}} T_{n,H_1, k_{01}, k_{01}-1} + \frac{R_n(R_n-k_{01})}{k_{01}} T_{n,H_0, k_{01},k_{01}}   \right) .
	\eea
	Thus, we have
	\bea
	B_{10}(\bfX_{n_1}, \bfY_{n_2})^{-1}
	&=& O_p \left(  \frac{R_n \, k_{01}}{R_n-k_{01}} T_{n,H_1, k_{01}, k_{01}-1} + \frac{R_n(R_n-k_{01})}{k_{01}} T_{n,H_0, k_{01},k_{01}} \right).
	\eea
	Similarly, it is easy to show that
	\bea
	B_{10}(\bfX_{n_1}, \bfY_{n_2})^{-1}
	&=& O_p \left(  \frac{R_n \, k_{02}}{R_n-k_{02}} T_{n,H_1, k_{02}, k_{02}-1} + \frac{R_n(R_n-k_{02})}{k_{02}} T_{n,H_0, k_{02},k_{02}} \right).
	\eea
	Let $k_{\min} = k_{01}\wedge k_{02}$, then under $H_1$ one has
	\bea
	B_{10}(\bfX_{n_1}, \bfY_{n_2})^{-1}
	&=& O_p \left(  \frac{R_n \, k_{\min}}{R_n-k_{\min}} T_{n,H_1, k_{\min}, k_{\min}-1} + \frac{R_n(R_n-k_{\min})}{k_{\min}} T_{n,H_0, k_{\min},k_{\min}} \right).
	\eea
\end{proof}

\begin{proof}[Proof of Corollary \ref{cor:Ploss_rate}]
	By the proof of Theorem \ref{theorem:band_sel}, we have
	\bea
	\bbE_0 \big[ \pi(k\neq k_0 \mid \bfX_n) \big] &\lesssim& C_{\star}^{-(p-k_0-1)},
	\eea
	where $C_{\star}>0$ is a constant depending on $\gamma, \tau$ and $M_{\rm bm}$.
	By the Markov's inequality, 
	\bean
	\bbP_0 (\what{k} \neq k_0)  &\le&  \bbP_0( \pi(k\neq k_0 \mid \bfX_n) > 1/2 )  \nonumber   \\
	&\le& 2\, \bbE_0 \big[ \pi(k\neq k_0 \mid \bfX_n) \big], \label{P0khat}
	\eean
	where $\what{k} = \argmax_k \pi(k \mid \bfX_n)$.
	
	Note that
	\bean
	&& \bbE_0 \bbE^{\pi_{(k)}} \Big( \| A_{n} - A_{0n} \|_\infty  \mid \bfX_n \Big)  \nonumber\\
	&=& \bbE_0 \Big[ \bbE^{\pi_{(k)}} \Big( \| A_{n} - A_{0n} \|_\infty  \mid \bfX_n \Big) I(\what{k} = k_0) \Big] + \bbE_0 \Big[ \bbE^{\pi_{(k)}} \Big( \| A_{n} - A_{0n} \|_\infty  \mid \bfX_n \Big) I(\what{k} \neq k_0) \Big]  . \quad\quad \label{Prisk_decomp}
	\eean
	The first term in \eqref{Prisk_decomp} is of order $( k_0(k_0 + \log p)/n )^{1/2}$ by Lemmas 2 and 4 in \cite{lee2017estimating}.
	The second term in \eqref{Prisk_decomp} is bounded above by
	\bea
	&& \Big\{ \bbE_0 \bbE^{\pi_{(k)}} \big( \|A_{n} - A_{0n} \|_\infty^2 \mid \bfX_n \big) \Big\}^{\frac{1}{2}} \, \left\{  \bbP_0 (\what{k} \neq k_0) \right\}^{\frac{1}{2}}  	\lesssim    \left( \frac{k_0 ( k_0 + \log p)}{n} \right)^{\frac{1}{2}}
	\eea
	by \eqref{P0khat} and Lemmas 2 and 4 in \cite{lee2017estimating}.
\end{proof}

\bibliographystyle{dcu}
\bibliography{bandwidth-selection-chol}

\end{document}